\def \Re {\mathbb{R}}
\def \uu {\bm{u}}
\def \XX {\bm{X}}
\def \xx {\bm{x}}
\def \E {\mathbb{E}}
\def \pp {\boldsymbol{p}}
\def \uu {\boldsymbol{u}}
\def \xx {\boldsymbol{x}}
\def \alpp {\boldsymbol{\alpha}}
\def \RR {\boldsymbol{R}}
\def \SS {\boldsymbol{S}}
\def \XX {\boldsymbol{X}}
\def \YY {\boldsymbol{Y}}
\def\VaR{{\mathrm{VaR}}}
\def\CTE{{\mathrm{CTE}}}
\def\e{{\mathbb E}}
\def\Cov{{\mathrm{Cov}}}
\newtheorem{theorem}{\bf Theorem}
\newtheorem{definition}{\bf Definition}
\newtheorem{question}{\bf Question}
\newtheorem{example}{\bf Example}
\newtheorem{proposition}{\bf Proposition}
\newtheorem{lemma}{\bf Lemma}
\newtheorem{corollary}{\bf Corollary}
\definecolor{darkread}{rgb}{0.7, 0, 0}
\begin{document}
\doublespace
\title{\Large\textbf{{Can a regulatory risk measure induce profit-maximizing risk capital allocations? The case of Conditional Tail Expectation}}}

\author{Nawaf Mohammed
\thanks{Department of Mathematics and Statistics, York University, Toronto, ON M3J 1P3, Canada.}\ \  \footnote{Corresponding author; postal address: 4700 Keele St, Toronto, ON M3J 1P3, Canada; email: {\href{mailto:nawaf@yorku.ca }{\color{magenta} nawaf@yorku.ca}} }\qquad
Edward Furman \thanks{Department of Mathematics and Statistics, York University, Toronto, ON M3J 1P3, Canada.}\qquad
Jianxi Su\thanks{Department of Statistics, Purdue University, West Lafayette, IN 47906, U.S.A.}
}

\date{}
\maketitle
\vspace{-1cm}
\begin{abstract}
Risk capital allocations (RCAs) are an important tool in quantitative risk management, where they are utilized to, e.g., gauge the profitability of distinct business units, determine the price of a new product, and conduct the marginal economic capital analysis. Nevertheless, the notion of RCA has been living in the shadow of another, closely related notion, of risk measure (RM) in the sense that the latter notion often shapes the fashion in which the former notion is implemented. In fact, as the majority of  the RCAs known nowadays are induced by RMs, the popularity of the two are apparently very much correlated. As a result, it is the RCA that is induced by the Conditional Tail Expectation (CTE) RM that has arguably prevailed in scholarly literature and applications. Admittedly, the CTE RM is a sound mathematical object and an important regulatory RM, but its appropriateness is controversial in, e.g., profitability analysis and pricing. In this paper, we address the question as to whether or not the RCA induced by the CTE RM may concur with alternatives that arise from the context of profit maximization. More specifically, we provide exhaustive description of all those probabilistic model settings, in which the \textit{mathematical} and \textit{regulatory} CTE RM may also reflect the risk perception of a profit-maximizing insurer.

\vspace{5mm}
\noindent
{{\em Key words and phrases}: Conditional tail expectation-based allocation, conditional geometric tail expectation-based allocation, conditional covariance, size-biased transform, standard simplex.}

\smallskip
\noindent
{{\em JEL Classification}: C60, C61.}

\end{abstract}

\newpage
\section{Introduction}
\label{sec:introd}

Consider positive random variables (RVs) $X_1,\ldots,X_n,\ n\in\mathbb{N}$, which represent losses due to distinct business units (BUs) of an insurer, and denote by the sets $\mathcal{N}=\{1,\ldots,n\}$ and $\mathcal{X}$ the collections of these BUs and losses, respectively.  Then, for the aggregate loss RV $S_X:=X_1+\cdots+X_n$, the map $A:\mathcal{X}\times\mathcal{X}\rightarrow [0,\ \infty)$, which assigns {non-negative} values to random pairs $(X,S)\in\mathcal{X}\times\mathcal{X}$, is called a risk capital (RC) allocation rule \citep[e.g.,][]{Denault2001,Furman2008b,Dhaene2012}. Additionally, if $A(X,X)=H(X)$, where the map $H:\mathcal{X}\rightarrow [0,\ \infty)$ is called a risk measure and assigns {non-negative} values to the random loss $X\in\mathcal{X}$, then the allocation rule $A$ is said to be {induced} by the risk measure $H$.

RC allocation rules have gained major importance in risk management and insurance applications in the context of price determination, profitability assessment, budgeting decision making, to name just a few \citep[][]{Venter2004,Guo2018}. Similarly, the academic significance of - also, interest in - the subject of RC allocations have been strong, as evidenced by the large and growing body of scholarly literature  \citep[e.g.,][for recent references in the \textit{Insurance: Mathematics and Economics} journal, alone]{Boonen2019, KimKim2019, ShushiYao2020, Furmanetal18, Furmanetal2020}.

Not surprisingly, therefore, numerous RC allocation rules have been proposed and studied, with the RC allocation rule induced by the conditional tail expectation (CTE) risk measure being arguably the most popular \citep[][]{Kalkbrener2005}. More specifically, for $q\in [0,\ 1)$, $s_q:= \VaR_q(S_X)= \inf \big\{s\in[0,\ \infty):\ \mathbb{P}(S_X\leq s) \geq q\big\}$, loss portfolio $\XX=(X_1,\ldots,X_n)\in\mathcal{X}^n$ and BU $i\in\mathcal{N}$, the CTE-based RC allocation rule, when well-defined and finite, is given by
\begin{equation}
\label{CTE-alloc-def}
\CTE_q(X_i, S_X)=\e\left[
X_i|\, S_X>s_q
\right].
\end{equation}
As it is common in real applications to use RC allocation \eqref{CTE-alloc-def} - also, other RC allocation rules - to attribute the exogenous aggregate risk capital, say $\kappa\in\mathbb{R}_+$, to distinct BUs in the set $\mathcal{N}$, and, in order to guarantee the total additivity of the RC allocation rule, it is beneficial to explore \citep[e.g.,][]{Dhaene2012} the quantity,
$
\kappa_{i}=\kappa\times r_{q,i},
$
where
\begin{align}
\label{Simplexn-def}
r_{q,i}=\frac{\CTE_q(X_i, S_X)}{\CTE_q(S_X)}, \quad i\in\mathcal{N}
\end{align}
is the associated \textit{proportional} RC allocation rule induced by the CTE risk measure $\CTE_q(X)=\CTE_q(X,X)$ for any $X\in \mathcal{X}$ and $q\in [0,1)$.


The RC allocation rule based on the CTE risk measure has been thoroughly studied on a variety of fronts. Namely, allocation rule \eqref{CTE-alloc-def} was obtained as the gradient and the Aumann-Shapely allocation induced by the CTE risk measure
 by \cite{Tasche2004} and \cite{Denault2001}, respectively. Also, for loss RVs with continuous cumulative distribution functions (CDFs), allocation rule \eqref{CTE-alloc-def} coincides with the RC allocation rule induced by the Expected Shortfall risk measure \citep[][]{Kalkbrener2005,Wang.2019}. Last but not least, allocation rule \eqref{CTE-alloc-def} belongs to the class of distorted \citep[][]{TsanakasBarnett2003} and weighted \citep[][]{Furman2008b} RC allocation rules and is optimal in the sense of \cite{Laeven2004} and \cite{Dhaene2012}.

The number of works that evaluate allocation rule \eqref{CTE-alloc-def} for random losses having distinct joint CDFs is really overwhelming. For just a few examples, we refer to: \cite{Panjer2001} and \cite{LandsmanValdez2003} for, respectively, normal and elliptical distributions;  \cite{CaiLi2005} for phase-type distributions; \cite{Furman2005c} for gamma distributions; \cite{Vernic2006} and \cite{Vernic2011,Hendriks2017} for, respectively, skew normal and Pareto distributions; \cite{Furman2010} for Tweedie distributions; \cite{Cossette2012} for compound distributions with positive severities; \cite{Cossette2013,Ratovomirija2017} for mixed Erlang distributions; \cite{Furmanetal18} for Generalized Gamma Convolutions; this list is by no means exhaustive.

Despite the abundant relevant academic literature, RC allocation rule \eqref{CTE-alloc-def} - as well as the CTE risk measure that induces it - have been employed in practice mainly due to the inclusion in existing regulatory accords. When this aspect is put aside, the quantity $r_{q,i}$ raises a number of concerns. {First,  it hinges on a contentious two-step procedure, as the numerator and the denominator in Equation \eqref{Simplexn-def} must be computed separately for each BU $i\in\mathcal{N}$; similar concerns have been brought in \cite{Chong2019} in the context of the quantity $\kappa_i$. Second, it neglects the risk
perception of the insurer and the economic environment in which they operate \citep[][]{Bauer2016}.}

Admittedly, it is not surprising in any way that regulations, which are driven by the notion of prudence, and insurers' targets, which are profit-oriented, diverge. Nevertheless, it is instrumental to determine whether or not there exist model settings under which the RC allocation rule induced by the CTE risk measure yields outcomes that address the just-mentioned two concerns. This is what we do in the present paper.

We have organized the rest of this papers as follows. In Section \ref{Sec-2}, we motivate in detail and formulate the problem of interest. We then solve this problem in Sections \ref{sec:characteristics} and \ref{sec-ind}, which provide ample elucidating examples. Some of our analysis and conclusions carry over to a family of risk measures that contains the CTE risk measure as a special case, which is demonstrated in Section \ref{sec-afterthoughts}. Section \ref{Sec-conclusions} concludes the paper. In the sequel, we routinely work with an atomless and rich probability space $(\Omega,\mathcal{F},\mathbb{P})$, and we let $L^\alpha$ and $L^\infty$ denote, respectively, the set of all RVs that have finite $\alpha$-th moment, $\alpha\in[0,\ \infty)$, and the set of all essentially bounded RVs on the probability space $(\Omega,\mathcal{F},\mathbb{P})$. 
{Unless specified otherwise, we work with the collection of integrable RVs, $L^1$, so that the CTE risk measure and its associated RC allocation are well-defined and finite.}
Finally, we denote by $F_X$ and $\phi_X$ the CDF and the Laplace transform of the RV $X$, and we use $\mathbbm{1}$ to denote the indicator function.

\section{Compositional allocation rules induced by the Conditional Tail Expectation risk measure and a question that arises}
\label{Sec-2}

Note that the CTE-based allocation exercise \eqref{Simplexn-def} can be framed within the context of the standard $n$-dimensional simplex space \citep[][]{Aitchison1986}:
\begin{align*}
\mathfrak{S}^n=\big\{(r_1,\ldots,r_n):\, r_i\in [0,\ 1], \ i=1,\ldots,n\ \text{and}\ r_{1}+\cdots+r_{n}=1 \big\}.
\end{align*}
Specifically, for $x_1,\ldots,x_n\in [0,\ \infty)$ and $s:=x_1+\cdots+x_n$, as well as for the special map $\mathcal{C}:[0,\ \infty)^n\rightarrow \mathfrak{S}^n$ with $\mathcal{C}_i(x_1,\ldots,x_n)=x_i/s,\ i=1,\ldots,n$, proportional allocation rule \eqref{Simplexn-def} is obtained via setting $x_i=\CTE_q(X_i,S_X)$, and so \citep[][]{Belles-Sampera2016,Boonen2019}
\[
r_{q,i}=\mathcal{C}_i\big(\CTE_q(X_1,S_X),\ldots,\CTE_q(X_n,S_X)\big)=\CTE_q(X_i,S_X)\,/\,\CTE_q(S_X),\ q\in[0,\ 1).
\]
We note in passing that a similar reformulation of the RC allocation exercise in the context of the $n$-dimensional simplex space can be achieved effortlessly for the whole class of weighted RC allocation rules \citep[][]{Furman2008b}, which are induced by the class of weighted risk measures \citep[][]{Furman2008a} and of which allocation rule \eqref{CTE-alloc-def} is a special case \citep[][]{Furman2019a}.

An alternative way to determine the proportional contribution of the $i$-th BU of an insurer to the aggregate risk capital - under the assumption that it is the CTE risk measure that induces the desired allocation rule - is by considering the ratio RV $R_i=X_i\,/\,S_X,\ i=1,\ldots,n$, directly. That is, while, for a fixed $q\in[0,\, 1)$, the proportional allocation $r_{q,i}$, confined with the help of the normalizing constant $\CTE_q(S_X)\in\mathbb{R}_+$ to the unit interval, $\mathcal{I}=[0,\, 1]$, operates on random pairs $(X_i,S_X)\in\mathcal{X}\times\mathcal{X}$, an alternative to $r_{q,i}$ proportional allocation, call it $\tilde{r}_{q,i}$, is chosen to operate on random pairs $(R_i,S_X)\in\mathcal{I}\times\mathcal{X},\ i=1,\ldots,n$, and so
\[
\tilde{r}_{q,i}=\CTE_q\left(\mathcal{C}_i(X_1,\ldots,X_n),S_X\right)=
\CTE_q\left(
R_i,S_X
\right),\ q\in[0,\ 1).
\]

While various properties of the proportional allocation $r_{q,i}$ have been well-studied, this is not so for its counterpart, $\tilde{r}_{q,i}$. Further, we report a number of important properties of the latter quantity. In this respect, our first proposition shows that the quantity $\tilde{r}_{q,i}$ agrees with the economic capital allocation rule proposed recently by \cite{Bauer2016}. Namely, while the motivation for the RC allocation rule $r_{q,i}$ is the central role that the CTE risk measure plays in today's (insurance) regulation, the proportional allocation $\tilde{r}_{q,i}$ turns out to be  a well-justified choice for a profit maximizing insurer with risk-averse counterparties in an incomplete market setting with frictional capital costs.

Consider the aggregate loss RV $S_X\in\mathcal{X}$ and the Geometric Tail Expectation (GTE) risk measure:
\begin{equation}
\label{rtilde-rm-def}
{\rm GTE}_q(S_X):=\exp\big\{\mathbb{E}\left[\log(S_X)|S_X>s_q\right]\big\},\ q\in[0,\ 1).
\end{equation}
The connection of risk measure \eqref{rtilde-rm-def} to the notion of geometric means \citep[e.g.,][]{Hardyetal1952} motivates its name; also, risk measure \eqref{rtilde-rm-def} is a \textit{tail quasi-linear mean} risk measure in the sense of \cite{BaurleShushi2020}. It is not difficult to see that, for any $q\in[0,\ 1)$, risk measure \eqref{rtilde-rm-def} is at least as prudent as the Value-at-Risk risk measure and may be finite even if the CTE risk measure is infinite. Namely, we have the following simple result.
{
\begin{proposition}
For any $X\in\mathcal{X}$ and $q\in[0,\, 1)$, we have the bounds
\begin{align*}
    {\rm{VaR}}_{q}(S_X)\leq {\rm GTE}_q(S_X) \leq \CTE_q(S_X).
\end{align*}
\end{proposition}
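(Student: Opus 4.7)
The plan is to sandwich ${\rm GTE}_q(S_X)$ between its neighbours via two one-line arguments: monotonicity of the logarithm on $\{S_X>s_q\}$ for the lower inequality, and Jensen's inequality applied to the concave map $x\mapsto\log x$ for the upper one. Since the summands composing $S_X$ are positive, $\log S_X$ is a.s.\ well-defined on the conditioning event, and the strict monotonicity of $\exp$ lets me convert freely between inequalities on the log-scale and on the original scale.

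For the first inequality $\VaR_q(S_X)\leq {\rm GTE}_q(S_X)$, I would note that on $\{S_X>s_q\}$ we have $S_X>s_q$ pointwise, and hence $\log S_X\geq \log s_q$ whenever $s_q>0$. Taking conditional expectation preserves this inequality, giving $\e[\log S_X\mid S_X>s_q]\geq \log s_q$, and exponentiating yields ${\rm GTE}_q(S_X)\geq s_q=\VaR_q(S_X)$. The degenerate case $s_q=0$ is immediate since ${\rm GTE}_q(S_X)\geq 0$ by positivity of $\exp$.

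For the second inequality ${\rm GTE}_q(S_X)\leq \CTE_q(S_X)$, I would invoke the conditional form of Jensen's inequality under the probability $\P(\,\cdot\mid S_X>s_q)$. Applied to the concave map $\log$, this yields $\e[\log S_X\mid S_X>s_q]\leq \log\e[S_X\mid S_X>s_q]=\log \CTE_q(S_X)$, and exponentiating both sides finishes the argument.

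There is no essential obstacle; the only mild technical point is confirming that the conditional expectations entering ${\rm GTE}_q(S_X)$ are well-defined, which follows from the standing hypothesis $S_X\in L^1$ (so that $(\log S_X)^+\leq S_X$ is integrable) together with $\P(S_X>s_q)>0$ for $q\in[0,1)$.
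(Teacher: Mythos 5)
Your proof is correct and follows essentially the same route as the paper: the lower bound by monotonicity of $\log$ on the event $\{S_X>s_q\}$, and the upper bound by Jensen's inequality (the paper applies it to the convex map $\exp$ rather than the concave map $\log$, which is the same argument viewed from the other side). Your added remarks on integrability of $\log S_X$ and the degenerate case $s_q=0$ are harmless refinements the paper omits.
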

\begin{proof}
By Jensen's inequality, we have, for $q\in[0,\, 1)$,
\[
\exp\big\{\mathbb{E}\left[\log(S_X)|S_X>s_q\right]\big\}\le \mathbb{E}\left[\exp\{\log(S_X)\}|S_X>s_q\right]=\mathbb{E}\left[S_X|S_X>s_q\right],
\]
which proves the upper bound. In addition, for $q\in[0,\, 1)$,
\[
s_q= \exp\big\{\mathbb{E}\left[\log(s_q)|S_X>s_q\right]\big\}\leq
\exp\big\{\mathbb{E}\left[\log(S_X)|S_X>s_q\right]\big\},
\]
establishing the lower bound and, hence, proving the proposition.
\end{proof}
}
Another immediate but worth-mentioning observation is that risk measure \eqref{rtilde-rm-def} is neither coherent in the sense of \cite{Artzner1999} nor convex in the sense of \cite{Follmer2001Convex}, as it violates translation-invariance. {Consequently, risk measure \eqref{rtilde-rm-def} is not a monetary risk measure. Nevertheless, risk measure \eqref{rtilde-rm-def} belongs to the class of return risk measures \citep[][]{Bellini2018}.} {{Moreover, when viewed through the prism of a profit maximizing insurer, risk measure \eqref{rtilde-rm-def} induces the optimal RC  allocation outcome, $\tilde{r}_{q,i}$, {in the sense of \cite{Bauer2016}};}} intuitively, this might be due to the decreasing marginal effect of the increase in aggregate loss \citep[][]{Bauer2016}. Our next statement about the RC allocation induced by risk measure \eqref{rtilde-rm-def} is formulated as a proposition.

{
Before stating our next result, we note that, similarly to how the CTE-based risk capital allocation is induced by the CTE risk measure, the GTE-based risk capital allocation,
\begin{align*}
{\rm GTE}_q(X_i,S_X):={\rm GTE}_q(S_X)\times \tilde{r}_{q,i},\quad q\in[0,\ 1),\ i\in\mathcal{N},
\end{align*}
is induced by risk measure \eqref{rtilde-rm-def}. The GTE-based RC is fully-additive, as $\sum_{i=1}^n {\rm GTE}_{q}(X_i, S)=
\sum_{i=1}^n {\rm GTE}_{q}\big(S_X\big) \times\tilde{r}_{q,i}={\rm GTE}_{q}\big(S_X\big)\times \sum_{i=1}^n \tilde{r}_{q,i}={\rm GTE}_{q}\big(S_X\big)$ for $q\in[0,\, 1)$.
\begin{proposition}\label{Gradient-prop}
The GTE-based RC allocation is the gradient allocation in the direction of the loss RV $X_i\in\mathcal{X},\ i\in\mathcal{N}$ induced by risk measure \eqref{rtilde-rm-def}.
\end{proposition}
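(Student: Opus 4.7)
The plan is to verify the Euler/gradient characterization directly: introduce the perturbed portfolio $S_{\uu}=\sum_{j=1}^n u_j X_j$ parametrized by $\uu=(u_1,\ldots,u_n)\in\Re_+^n$, show that
\[
\frac{\partial}{\partial u_i}\,{\rm GTE}_q(S_{\uu})\bigg|_{\uu=\one}={\rm GTE}_q(X_i,S_X),
\]
and conclude by homogeneity and Euler's relation. The reason to expect this identity is that, formally,
\[
\frac{\partial}{\partial u_i}\log S_{\uu}=\frac{X_i}{S_{\uu}},
\]
which, evaluated on $\{S_X>s_q\}$, is exactly the ratio RV $R_i$ appearing inside the conditional expectation in $\tilde{r}_{q,i}$.

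More precisely, I would first write
\[
{\rm GTE}_q(S_{\uu})=\exp\!\Bigl\{\tfrac{1}{1-q}\,\e\!\left[\log(S_{\uu})\,\mathbbm{1}\{S_{\uu}>s_{q,\uu}\}\right]\Bigr\},
\]
where $s_{q,\uu}=\VaR_q(S_{\uu})$, and apply the chain rule to obtain
\[
\frac{\partial}{\partial u_i}{\rm GTE}_q(S_{\uu})={\rm GTE}_q(S_{\uu})\cdot \frac{\partial}{\partial u_i}\,\e\!\left[\log(S_{\uu})\,|\,S_{\uu}>s_{q,\uu}\right].
\]
The second factor is then differentiated by splitting into (i) the interior term obtained by differentiating $\log(S_{\uu})$ under the integral sign, which, after dividing by $1-q$, yields $\e[X_i/S_{\uu}\,|\,S_{\uu}>s_{q,\uu}]$; and (ii) the boundary term arising from the $\uu$-dependence of $s_{q,\uu}$, which is proportional to $\log(s_{q,\uu})\cdot\bigl[\tfrac{\partial}{\partial u_i}\P(S_{\uu}>s_{q,\uu})\bigr]$ and vanishes because $\P(S_{\uu}>s_{q,\uu})\equiv 1-q$ is constant in $\uu$. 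Evaluating at $\uu=\one$ then produces
\[
\frac{\partial}{\partial u_i}{\rm GTE}_q(S_{\uu})\bigg|_{\uu=\one}={\rm GTE}_q(S_X)\cdot \e\!\left[R_i\,|\,S_X>s_q\right]={\rm GTE}_q(S_X)\cdot \tilde{r}_{q,i}={\rm GTE}_q(X_i,S_X).
\]

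The main technical obstacle is the justification of the interchange of differentiation and expectation, together with the vanishing of the boundary term: both steps require some regularity on the joint law of $(X_1,\ldots,X_n)$ near the VaR surface (e.g., a continuous joint density, so that $s_{q,\uu}$ varies smoothly in $\uu$ and $\log(S_{\uu})$ is uniformly integrable on the conditioning event in a neighbourhood of $\uu=\one$). Under such standard conditions, a dominated-convergence argument legitimizes the differentiation under the integral, and the identity $\P(S_{\uu}>s_{q,\uu})=1-q$ kills the boundary contribution, completing the proof. The full-additivity already noted before the proposition then matches Euler's theorem for the positively homogeneous map $\uu\mapsto {\rm GTE}_q(S_{\uu})$, which is a useful consistency check.
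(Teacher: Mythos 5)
Your proof is correct and follows essentially the same route as the paper: differentiate ${\rm GTE}_q(S_X(\uu))$ in $u_i$ via the chain rule, identify the inner derivative with $\e[X_i/S_X(\uu)\mid S_X(\uu)>\VaR_q(S_X(\uu))]$, and evaluate at $\uu=\one$ to get ${\rm GTE}_q(S_X)\times\tilde r_{q,i}$. If anything, you are more explicit than the paper, which simply asserts the derivative without spelling out the vanishing of the boundary term (via $\P(S_{\uu}>s_{q,\uu})\equiv 1-q$) or the regularity needed to differentiate under the expectation.
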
}
\begin{proof}
Since the GTE risk measure is positively homogeneous, by Euler's theorem we have,
for $\uu=(u_1,\dots,u_n)\in \mathcal{I}^n$ and $S_X(\uu):=u_1X_1+\cdots+u_nX_n$, 
\begin{align*}
{\rm GTE}_{q}\big(S_X(\uu)\big)=\sum_{i=1}^n u_i\,\frac{\partial}{\partial{u_i}}{\rm GTE}_q\big(S_X(\uu)\big),\quad \mbox{with  $q\in[0,\ 1)$, $ i\in\mathcal{N}$}.
\end{align*}
Therefore, for $\mathbf{1}$ denoting the $n$-variate vector of ones, we obtain
\begin{align*}
\frac{\partial}{\partial{u_i}} {\rm GTE}_{q}\big(S_X(\uu)\big) \bigg\vert_{\uu=\mathbf{1}}
&=\left({\rm GTE}_{q}\big(S_X(\uu)\big)
\times \E\left[\frac{X_i}{S_X(\uu)}\bigg\vert\ S_X(\uu)>\textnormal{VaR}_q(S_X(\uu))\right] \right)
\bigg\vert_{\uu=\mathbf{1}}\\[2mm]
&= {\rm GTE}_{q}\big(S_X\big) \times \tilde{r}_{q,i},\quad  \mbox{$i\in\mathcal{N}$.}
\end{align*}
This completes the proof of the proposition.
\end{proof}

The next assertion demonstrates that the proportional allocation induced by the CTE risk measure is an approximation of the GTE allocation induced by risk measure \eqref{rtilde-rm-def}.
\begin{proposition}\label{Linearapp-prop}
The proportional allocation $r_{q,i}$ is a linear approximation of the proportional allocation $\tilde{r}_{q,i}$ for $i\in\mathcal{N}$.
\end{proposition}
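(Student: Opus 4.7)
The plan is to obtain $r_{q,i}$ by linearizing the map $g:(0,\infty)^2\to[0,1]$ defined by $g(x,s)=x/s$, evaluated at the random pair $(X_i,S_X)$ conditioned on the event $\{S_X>s_q\}$. Writing $\mu_i(q):=\mathbb{E}[X_i\mid S_X>s_q]=\mathrm{CTE}_q(X_i,S_X)$ and $\mu_S(q):=\mathbb{E}[S_X\mid S_X>s_q]=\mathrm{CTE}_q(S_X)$, I first note that
\[
\tilde r_{q,i}=\mathbb{E}\!\left[g(X_i,S_X)\mid S_X>s_q\right],
\]
which expresses the GTE-induced proportional allocation as a conditional mean of a smooth nonlinear function of $(X_i,S_X)$.

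Next, I would compute the two partial derivatives $\partial g/\partial x=1/s$ and $\partial g/\partial s=-x/s^2$ and form the first-order Taylor expansion of $g$ about the center $(\mu_i(q),\mu_S(q))$:
\[
\frac{X_i}{S_X}\;\approx\;\frac{\mu_i(q)}{\mu_S(q)}+\frac{1}{\mu_S(q)}\bigl(X_i-\mu_i(q)\bigr)-\frac{\mu_i(q)}{\mu_S(q)^2}\bigl(S_X-\mu_S(q)\bigr).
\]
Taking the conditional expectation given $\{S_X>s_q\}$ then kills the two linear correction terms by the very definition of $\mu_i(q)$ and $\mu_S(q)$, leaving
\[
\tilde r_{q,i}\;\approx\;\frac{\mu_i(q)}{\mu_S(q)}=\frac{\mathrm{CTE}_q(X_i,S_X)}{\mathrm{CTE}_q(S_X)}=r_{q,i},
\]
which is precisely the claim.

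I expect no substantive obstacle: the statement is essentially a Taylor-expansion identity, and the cleanest route is simply to display the expansion and verify that the two first-order error terms have zero conditional mean under $\mathbb{P}(\,\cdot\mid S_X>s_q)$. The only interpretive point to be careful about is what the proposition means by \emph{linear} approximation; I would make it explicit that the approximation refers to the first-order Taylor polynomial of $g(x,s)=x/s$ around the conditional-tail means, so that the residual is a second-order expression that can be written in terms of the conditional tail variance of $S_X$ and conditional tail covariance between $X_i$ and $S_X$. This also makes it transparent why equality $r_{q,i}=\tilde r_{q,i}$ holds trivially when $X_i$ is a deterministic multiple of $S_X$ (the second-order remainder then vanishes identically), providing a natural sanity check on the approximation.
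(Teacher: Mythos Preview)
Your proposal is correct and follows essentially the same approach as the paper: both linearize $g(x,s)=x/s$ via first-order Taylor expansion about the conditional-tail means $(\CTE_q(X_i,S_X),\CTE_q(S_X))$ and observe that the two first-order correction terms vanish upon taking $\mathbb{E}[\,\cdot\mid S_X>s_q]$. Your added remarks on the second-order remainder and the sanity check are extra commentary, but the core argument is identical to the paper's.
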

\begin{proof}
Consider the function $g(x_i,s)=x_i/s$ for $x_i,s\in\mathbb{R}_+$, $i=1,\ldots,n$, and denote its partial derivatives by
\begin{align*}g_i(x_i,s)=\frac{\partial}{\partial x_i}\,g(x_i,s)\quad \textrm{and}\quad g_s(x_i,s)=\frac{\partial}{\partial s}\,g(x_i,s).
\end{align*}
Then the first-order Taylor expansion of $g$ around $(x_0,s_0)=\big(\CTE_q(X_i\,,\, S_X),\, \CTE_q(S_X)\big)$, yields
\begin{align*}
x_i\,/\,s =   g(x_0,s_0) &+ g_i(x_0,s_0)\,(x_i - x_0) +g_s(x_0,s_0)\, (s - s_0 )+R_1(x_i,s),
\end{align*}
where $R_1(x_i,s)$ is the reminder term for all $x_i,s\in\mathbb{R}_+,\ q\in[0,\ 1)$, $i=1,\ldots,n$. Consequently, we have
\begin{align*}
    \tilde{r}_{q,i}\approx r_{q,i} +& g_i(x_0,s_0)\,\mathbb{E}\big[(X_i - x_0)|\ S_X>s_q\big] +g_s(x_0,s_0)\, \mathbb{E}\big[(S_X - s_0 )|\ S_X>s_q\big]=r_{q,i},
\end{align*}
which establishes the desired approximation and thus completes the proof of the proposition.
\end{proof}
Finally, our last proposition - which can be considered a follow-up on Proposition \ref{Linearapp-prop} - delineates the difference between the allocations $r_{q,i}$ and $\tilde{r}_{q,i}$. The proof is an immediate consequence of the identity
\[
\Cov(R_i,S_X|\ S_X>s_q)=\mathbb{E}[X_i|\ S_X>s_q]-\mathbb{E}[R_i|\ S_X>s_q]\times \mathbb{E}[S_X|\ S_X>s_q],
\]
which holds for all $q\in[0,\ 1)$ and $(X_i,S_X)\in\mathcal{X}\times\mathcal{X},\ i=1,\ldots,n$.
\begin{proposition}\label{Covrel-prop}
Given that all the quantities below are well-defined and finite, we have
\[
r_{q,i} =\tilde{r}_{q,i}+\frac{\Cov(R_i,S_X|\ S_X>s_q)}{\CTE_q(S_X)},\ q\in[0,\ 1),\ i\in\mathcal{N}.
\]
\end{proposition}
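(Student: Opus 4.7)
The plan is to derive the stated identity directly from the definition of the conditional covariance, using the elementary algebraic fact that $R_i S_X = X_i$ by the very construction of the ratio random variable $R_i = X_i/S_X$.

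First, I would expand the conditional covariance via its product-minus-product-of-means form, namely
\begin{equation*}
\Cov(R_i, S_X \mid S_X > s_q) = \mathbb{E}[R_i \, S_X \mid S_X > s_q] - \mathbb{E}[R_i \mid S_X > s_q]\,\mathbb{E}[S_X \mid S_X > s_q].
\end{equation*}
This is valid whenever the relevant conditional second moments exist, which is assumed in the statement of the proposition.

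Next, I would substitute $R_i S_X = X_i$ in the first term, obtaining $\mathbb{E}[X_i \mid S_X > s_q] = \CTE_q(X_i, S_X)$, and identify the two factors in the subtracted term as $\mathbb{E}[R_i \mid S_X > s_q] = \CTE_q(R_i, S_X) = \tilde{r}_{q,i}$ and $\mathbb{E}[S_X \mid S_X > s_q] = \CTE_q(S_X)$. This yields exactly the identity highlighted in the excerpt,
\begin{equation*}
\Cov(R_i, S_X \mid S_X > s_q) = \CTE_q(X_i, S_X) - \tilde{r}_{q,i}\,\CTE_q(S_X).
\end{equation*}

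Finally, I would divide both sides by $\CTE_q(S_X) > 0$, recognize the ratio $\CTE_q(X_i, S_X)/\CTE_q(S_X)$ as $r_{q,i}$ by definition \eqref{Simplexn-def}, and rearrange to obtain the claimed formula. No step is genuinely an obstacle; the only subtlety is that one must invoke the finiteness/integrability assumptions stated in the proposition to justify splitting the conditional expectation of the product and dividing by $\CTE_q(S_X)$, the latter being strictly positive because $S_X$ takes positive values.
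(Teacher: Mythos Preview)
Your proposal is correct and follows essentially the same approach as the paper: the paper states that the proposition is an immediate consequence of the identity $\Cov(R_i,S_X\mid S_X>s_q)=\mathbb{E}[X_i\mid S_X>s_q]-\mathbb{E}[R_i\mid S_X>s_q]\,\mathbb{E}[S_X\mid S_X>s_q]$, which is exactly the identity you derive from $R_iS_X=X_i$, and then one divides by $\CTE_q(S_X)$ and rearranges.
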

Proposition \ref{Covrel-prop} implies that it is the sign of the covariance between the RVs $R_i$ and $S_X$, that determines the order of the allocations $r_{q,i}$ and $\tilde{r}_{q,i}$; note that, as $R_1+\cdots+R_n=1$ almost surely, we have $\sum_{i=1}^n \Cov(R_i,S_X|\ S_X>s_q)=0$ \citep[][for examples, albeit in a different context, of the importance of covariances in insurance and finance]{Furman2010b}. Also, and more importantly, Proposition \ref{Covrel-prop}  suggests that the proportional allocation rules induced by the CTE risk measure and those induced by risk measure \eqref{rtilde-rm-def} coincide if the aforementioned covariance is nil. This motivates the following question that engages us in the rest of this paper.

\begin{question}\label{question}
For $X_i\in\mathcal{X}$, $i\in\mathcal{N}$, can we characterize those portfolios of losses $\XX=(X_1,\ldots,X_n)\in\mathcal{X}^n$, for which the proportional allocations $r_{q,i}$ and $\tilde{r}_{q,i}$ agree for every $q\in [0,\ 1)$?
\end{question}

At the outset, let us note that if the answer to the question above were in the affirmative, then this would imply that the regulatory (e.g., Swiss Solvency Test) CTE risk measure induces an optimal RC allocation rule for a profit maximizing insurer; the richer the class of joint CDFs of the RV $\XX\in\mathcal{X}^n$ sought in Question \ref{question}, the more common the just-mentioned and apparently desirable agreement between the regulatory requirements and the risk perceptions of insurers.

Speaking formally, our goal is to characterize the following collection of loss RVs:
\begin{align}
\label{eq:DSI}
  \mathfrak{W}=\big\{\XX=(X_1,\ldots,X_n)\in\mathcal{X}^n\ :\, r_{q,i}= \widetilde{r}_{q,i}\ \mbox{ for all $q\in [0,1)$ and $i\in\mathcal{N}$}\big\}.
\end{align}
It is to be noted that the set $\mathfrak{W}$ can not be empty. To see a trivial case in which $\tilde{r}_{q,i}=r_{q,i}$ for every $q\in[0,\ 1)$ and $i\in\mathcal{N}$, let the loss RV $\XX=(X_1,\ldots,X_n)\in\mathcal{X}^n$ have identically distributed coordinates, $X_i\in\mathcal{X}_i$, and an exchangeable copula function, then we have $\e[R_i|\ S_X=s]=1/n$ resulting in $\Cov(R_i,S_X|\,S_X=s)\equiv 0$, and therefore $\Cov(R_i,S_X|\, S_X>s_q)=0$ for all $q\in[0,\ 1)$ and $i=1,\ldots,n$. Consequently, the set of all loss RVs $\XX=(X_1,\ldots,X_n)\in\mathcal{X}^n$, such that the proportional allocation rules $r_{q,i}$ and $\tilde{r}_{q,i}$ coincide has at least one portfolio of losses in it. 

We devote the following sections of this paper to studying what other random loss RVs - besides the trivial example above - are members of the set $\mathfrak{W}$.

\section{General considerations}
\label{sec:characteristics}

In this section, we devise the necessary and sufficient conditions for the equality, $\tilde{r}_{q,i}=r_{q,i}$, for all $q\in[0,\ 1)$ and $i\in\mathcal{N}$. For this, we need a few auxiliary notions first. That is, Definitions \ref{sb-uv-def} and \ref{sb-mv-def} below introduce the univariate size-biased transform and its multivariate extension \citep{Patil1976, Arratiaetal2019, Furmanetal2020}, both playing major roles in our analysis.
\begin{definition}\label{sb-uv-def}
Let $X\in L^\alpha$ be a positive loss RV, then the size-biased counterpart of order $\alpha\in\mathbb{R}_+$ of the loss RV $X$, call it $X^{[\alpha]}$, is defined via:
\begin{equation}
\mathbb{P}\left(
X^{[\alpha]}\in dx
\right)=\frac{x^\alpha}{\e[X^\alpha]}\mathbb{P}\left(X\in dx\right)\quad \textnormal{ for all }x\in\mathbb{R}_+.
\end{equation}
When $\alpha=1$, we simply write $X^\ast$ for the size-biased of order one variant of the RV $X\in L^1$. The RVs $X$ and $X^{[\alpha]}$ are independent by construction for all $\alpha\in\mathbb{R}_+$.
\end{definition}
\begin{definition}
\label{sb-mv-def}
Let $\XX=(X_1,\ldots,X_n)\in\mathcal{X}^n$ be a loss RV with positive univariate coordinates $X_i\in L^{\alpha_i},\ i=1,\ldots,n$, then the multivariate size-biased counterpart of order $\alpp =(\alpha_1,\ldots,\alpha_n)\in \mathbb{R}_+^n$ of the loss RV $\XX$, call it $\XX^{[\alpp]}$, is defined via
\begin{equation}
\mathbb{P}\left(
\XX^{[\alpp]}\in d\xx
\right)=\frac{x_1^{\alpha_1}\times\cdots\times x_n^{\alpha_n}}{\e[X_1^{\alpha_1}\times\cdots\times X_n^{\alpha_n}]}\mathbb{P}\left(\XX\in d\xx\right)\quad \textnormal{ for all }\xx=(x_1,\ldots,x_n)\in\mathbb{R}_+^n.
\end{equation}
{The RVs $\XX$ and $\XX^{[\alpp ]}$ are independent by construction \citep[e.g.][for a similar discussion]{Patil1976, Arratiaetal2019, Furmanetal2020}.}
\end{definition}

We next define the \textit{partial size-biased transform}, which is a useful special case of the one presented in Definition \ref{sb-mv-def} \citep[e.g.,][for a few recent results in which the partial size-biased transform plays a central role but is not explicitly defined]{Arratiaetal2019, Furmanetal2020}.
\begin{definition}\label{def:multi-SB}
Consider the size-biased RV $\XX^{[\alpp ]},\ \alpp =(\alpha_1,\ldots,\alpha_n)$ as per Definition \ref{sb-mv-def}. Then, in the special case where the $i$-th coordinate of the vector $\alpp $ is equal to $\alpha_i=\alpha\in\mathbb{R}_+,\ i=1,\ldots,n$, whereas all other coordinates of the vector $\alpp $ are equal to zero, we call the implied transform, the $i$-th \textit{partial} size-biased transform of order $\alpha$, and denote the corresponding RV by $\XX^{[(\alpha)_i]}$. Namely, we have
\begin{equation}
\label{multi-SB-eq}
\mathbb{P}\left(
\XX^{[(\alpha)_i]}\in d\xx
\right)=\frac{x^{\alpha}_i}{\e[X_i^{\alpha}]}\, \mathbb{P}\left(\XX\in d\xx\right)\quad \textnormal{ for all }\xx=(x_1,\ldots,x_n)\in\mathbb{R}_+^n.
\end{equation}
{The RVs $\XX$ and $\XX^{[(\alpha)_i]}$ are independent by construction \citep[e.g.][for a similar discussion]{Patil1976, Arratiaetal2019, Furmanetal2020}.} For the case $\alpha=1$ and $X_i\in L^1$, we simply write, $\XX^{(\ast)_i}$, for the partial size-biased counterpart of the RV $\XX$.
\end{definition}
The operation of size-biasing has an important interpretation in the context of actuarial science and, more generally, in quantitative risk management, where it is considered \textit{loading} for model/sample risk. Indeed, it is not difficult to see that the size-biased loss RVs $X^{[\alpha]}$ and $\XX^{[\alpp]}$ (also, $\XX^{(\alpha)_i}$) dominate stochastically the loss RVs $X$ and $\XX$, respectively.

Furthermore, the partial size-biased RV, $\XX^{[(\alpha)_i]}$, plays an important role for \textit{size-biasing} sums of RVs. Namely, let $S_X^\ast=(X_1+\cdots+X_n)^\ast$, then the distribution of the RV $S_X^\ast$ admits a finite-mixture representation \citep[e.g.,][]{Arratiaetal2019} in terms of the partial size-biased RVs. Indeed, let $\phi_{S_X^\ast}$ denote the Laplace transform of the RV $S_X^\ast$, then, for $p_i=\e[X_i]\,/\,\e[S_X]$, we have
\begin{equation}\label{LTofSast-eq}
\phi_{S_X^\ast}(t)=\sum_{i=1}^n p_i\times \phi_{S_X^{(\ast)_i}}(t),\quad  {\rm Re}(t)>0,
\end{equation}
where $S_X^{(\ast)_i}$ is the sum of the coordinates of the partially size-biased RV $\XX^{(\ast)_i},\ i=1,\ldots,n$.

The following lemma is a variation of Equation \eqref{LTofSast-eq} that we find useful in this paper.

\begin{lemma}\label{Partsb-lemma}
Consider the RV $\XX_+=(X_1,\ldots,X_n,Y_{n+1},\ldots,Y_{n+m})\in\mathcal{X}^{n+m},\ n,m\in\mathbb{N}$, and let $S_X=\sum_{i=1}^n X_i,$ $S_Y=\sum_{i=n+1}^{n+m} Y_i$, $S_+=S_X+S_Y$, and $S_{+}^{(\ast)_X}=S_X^\ast+S_Y$. Then the distribution of the RV $S^{(\ast)_X}_{+}$ admits a mixture representation in the sense that we have $S^{(\ast)_X}_{+}=_{d} S^{(\ast)_K}_{+}$, where the RV $K\in\{1,\dots,n\}$ is such that $\mathbb{P}(K=k)=\e[X_i]\,/\,\e[S_X],\ k=1,\ldots,n$
\end{lemma}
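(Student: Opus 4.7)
The plan is to mimic and extend the Laplace-transform argument sketched behind \eqref{LTofSast-eq}, since the assertion is essentially that formula applied not to the sum $S_X$ but to the augmented sum $S_X+S_Y$, with the size-biasing still carried out in the direction of the first $n$ coordinates only. The first step is to unpack the definition: reading $S^{(\ast)_X}_{+}=S_X^\ast+S_Y$ in the joint sense of Definition \ref{def:multi-SB} applied to the block $\XX_+$ with weight $s_x/\e[S_X]$, the Laplace transform of $S^{(\ast)_X}_{+}$ at any $t$ with ${\rm Re}(t)>0$ equals
\[
\phi_{S^{(\ast)_X}_{+}}(t)=\frac{1}{\e[S_X]}\,\e\big[S_X\,\exp(-t\,S_{+})\big].
\]

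The second step is to expand $S_X=X_1+\cdots+X_n$ inside this expectation and split the single expectation into $n$ pieces, writing
\[
\phi_{S^{(\ast)_X}_{+}}(t)=\sum_{k=1}^{n}\frac{\e[X_k]}{\e[S_X]}\cdot\frac{\e\big[X_k\,\exp(-t\,S_{+})\big]}{\e[X_k]}.
\]
The third step is to recognize each ratio $\e[X_k\exp(-t S_+)]/\e[X_k]$ as the Laplace transform, evaluated at $t$, of the sum of all coordinates of the partial size-biased vector $\XX_+^{(\ast)_k}$ as introduced in Definition \ref{def:multi-SB}; call that sum $S^{(\ast)_k}_{+}$. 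This identification is immediate from \eqref{multi-SB-eq} applied to the extended vector $\XX_+$ in direction $k$.

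Combining these two observations gives
\[
\phi_{S^{(\ast)_X}_{+}}(t)=\sum_{k=1}^{n}\mathbb{P}(K=k)\,\phi_{S^{(\ast)_k}_{+}}(t),
\]
which is exactly the Laplace transform of a finite mixture, i.e., of the RV obtained by first drawing $K\in\{1,\ldots,n\}$ with mass $\e[X_k]/\e[S_X]$ and then, conditionally on $K=k$, sampling $S^{(\ast)_k}_{+}$. Uniqueness of the Laplace transform on $\mathbb{R}_+$-valued RVs then delivers the claimed distributional identity. I do not anticipate any serious obstacle in the calculation itself; the only point requiring care is a clean interpretation of $S^{(\ast)_X}_{+}$ as a joint size-biasing of the block $(S_X,S_Y)$ with respect to the first component, so that $S_Y$ retains its conditional dependence on $S_X$ rather than being treated as independent of $S_X^\ast$. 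Once that interpretation is in place, the proof is a direct generalization of \eqref{LTofSast-eq}.
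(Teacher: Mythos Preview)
Your proposal is correct and follows essentially the same approach as the paper: compute $\phi_{S^{(\ast)_X}_{+}}(t)=\e[S_X e^{-tS_+}]/\e[S_X]$, expand $S_X=\sum_{k=1}^n X_k$, and recognize each summand as $p_k\,\phi_{S_+^{(\ast)_k}}(t)$ via Definition~\ref{def:multi-SB}. The paper's proof is the same one-line Laplace-transform computation, without the additional commentary on the joint interpretation of $S^{(\ast)_X}_{+}$.
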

\begin{proof}
Let $\phi_{S^{(\ast)_X}_{+}}$ denote the Laplace transform of the RV $S^{(\ast)_X}_{+}$, then, with the help of Equation \eqref{multi-SB-eq}, we have, for $p_i=\e[X_i]\,/\,\e[S_X]$,
\begin{eqnarray*}
\phi_{S^{(\ast)_X}_{+}}(t)&=& \frac{\e\left[S_X\,  e^{-tS_{+}}\right]}{\E[S_X]}=\sum_{i=1}^n p_i \frac{\e\left[X_i\, e^{-tS_{+}}\right]}{\e[X_i]}=\sum_{i=1}^n p_i\times \phi_{S_+^{(\ast)_i}}(t),\quad {\rm Re}(t)>0,
\end{eqnarray*}
which establishes the desired result and thus completes the proof.
\end{proof}

The next assertion spells out the sufficient and necessary conditions for the loss portfolios $\XX=(X_1,\ldots,X_n)\in\mathcal{X}^n$ to belong to the set $\mathfrak{W}$, and hence it answers Question \ref{question}.  The non-technical interpretation of the assertion is that for loss portfolios in the set $\mathfrak{W}$ and under the paradigm of loading for model/sample risk, the choice of the \textit{load direction} as per Definition \ref{def:multi-SB}
does not impact the distribution of the loaded aggregate loss RV.
\begin{theorem}
\label{thm:main}
Consider the loss RV $\XX=(X_1,\ldots,X_n)\in\mathcal{X}^n$ and assume that $X_i\in L^1$, then we have the equality $r_{q,i}=\tilde{r}_{q,i}(=\e[X_i]/\e[S_X])$ for all $q\in [0,\ 1),\ i\in\mathcal{N}$, if and only if $S_X^{(\ast)_i}=_{d}S_X^{(\ast)_j}(=_dS_X^{\ast}),\ i\neq j\in\mathcal{N}$.
\end{theorem}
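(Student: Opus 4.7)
The plan is to use Proposition \ref{Covrel-prop} to recast the target equality $r_{q,i}=\tilde{r}_{q,i}$ as the vanishing-covariance identity $\Cov(R_i,S_X\mid S_X>s_q)=0$, and to route both directions of the equivalence through the intermediate almost-sure identity $\E[X_i\mid S_X]=p_i\,S_X$, where $p_i=\E[X_i]/\E[S_X]$. Throughout I write $\psi_i(s)=\E[X_i\mid S_X=s]$ and $\xi_i(s)=\psi_i(s)/s$; by the tower property $\E[R_i\mid S_X]=\xi_i(S_X)$ a.s., so $\Cov(R_i,S_X\mid S_X>t)=\Cov(\xi_i(S_X),S_X\mid S_X>t)$.

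For sufficiency, I would assume that the $S_X^{(\ast)_i}$'s share a common law. Lemma \ref{Partsb-lemma} represents $S_X^\ast$ as the $p$-weighted finite mixture of the $S_X^{(\ast)_i}$'s, so each of them coincides with $S_X^\ast$ in law. Unfolding $\mathbb{P}(S_X^{(\ast)_i}>t)=\mathbb{P}(S_X^\ast>t)$ via Definition \ref{def:multi-SB} gives $\E[X_i\,\mathbbm{1}\{S_X>t\}]=p_i\,\E[S_X\,\mathbbm{1}\{S_X>t\}]$ for every $t$, which upgrades by a monotone-class argument to $\E[X_i\mid S_X]=p_i\,S_X$ almost surely. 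Substituting this into $r_{q,i}=\CTE_q(X_i,S_X)/\CTE_q(S_X)$ yields $r_{q,i}=p_i$, while $\tilde{r}_{q,i}=\E[\xi_i(S_X)\mid S_X>s_q]=p_i$ follows by the tower property.

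For necessity, I would start from $\Cov(\xi_i(S_X),S_X\mid S_X>t)=0$ for every $t\in\{s_q:q\in[0,1)\}$, cleared of denominators as the pointwise identity
\[
\E[\xi_i(S_X)\,S_X\,\mathbbm{1}\{S_X>t\}]\,\mathbb{P}(S_X>t)=\E[\xi_i(S_X)\,\mathbbm{1}\{S_X>t\}]\,\E[S_X\,\mathbbm{1}\{S_X>t\}].
\]
Differentiating both sides in $t$ in the absolutely continuous case, factoring out $f_{S_X}(t)$, and using the pointwise identity itself to eliminate cross-terms, the expression collapses to the factorization $\bigl(\xi_i(t)\,\mathbb{P}(S_X>t)-\E[\xi_i(S_X)\,\mathbbm{1}\{S_X>t\}]\bigr)\cdot \E[S_X-t\mid S_X>t]=0$. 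Since the mean-excess factor is strictly positive below the essential supremum of $S_X$, this forces $\xi_i(t)=\E[\xi_i(S_X)\mid S_X>t]$. A second differentiation of the resulting integral equation reduces it to $\xi_i'(t)\,\mathbb{P}(S_X>t)=0$, so $\xi_i$ is almost surely constant on the support of $S_X$, and the normalization $\E[\xi_i(S_X)\,S_X]=\E[X_i]$ pins the constant down to $p_i$. Hence $\E[X_i\mid S_X]=p_i\,S_X$ almost surely, and reading this back in terms of the size-biased transforms gives $S_X^{(\ast)_i}\eqd S_X^\ast$ for every $i$, whence $S_X^{(\ast)_i}\eqd S_X^{(\ast)_j}$ for all $i\ne j$.

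The main obstacle will be the differentiation step in the necessity direction: it is transparent when $F_{S_X}$ is absolutely continuous, but in the presence of atoms the key display must be read as a Stieltjes identity, the analysis carried out on each interval of continuity of $F_{S_X}$, and the value of $\xi_i$ matched across atoms using the original covariance identity at the corresponding quantile levels. I would also need to verify that the collection $\{s_q:q\in[0,1)\}$ is dense enough in the support of $S_X$ to propagate the pointwise conclusion to every point required.
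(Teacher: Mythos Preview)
Your proposal is correct and follows essentially the same route as the paper: both directions are funneled through the intermediate identity $\E[X_i\mid S_X]=p_i\,S_X$ (equivalently $\E[R_i\mid S_X]\equiv p_i$), with Proposition~\ref{Covrel-prop} supplying the bridge to the vanishing conditional covariance. The only cosmetic differences are that the paper uses Laplace transforms rather than tail probabilities plus a monotone-class step in the sufficiency direction, and that in the necessity direction the paper simply asserts the passage from $\E[S_X\,G_i(S_X)\mid S_X>u]=\E[G_i(S_X)\mid S_X>u]\,\E[S_X\mid S_X>u]$ to $G_i(u)=\E[G_i(S_X)\mid S_X>u]$ and thence to $G_i\equiv\text{const}$, whereas you spell out the differentiation that actually justifies it; your caveat about atoms and about the range $\{s_q:q\in[0,1)\}$ is a genuine technical point that the paper's published proof glosses over.
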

\begin{proof}
\label{proof:thm-main}
Assume that $r_{q,i}=\tilde{r}_{q,i}$ for all $q\in [0,\ 1)$ and $i=1,\ldots,n$. By Proposition \ref{Covrel-prop}, this is equivalent to requesting that $\textrm{Cov}(R_i,S_X|\, S_X>u)=0$ for all $u\geq 0$, or, in other words with the notation $G_i(s)=\e[R_i|\ S=s]-\e[R_i],\ i=1,\ldots,n$, that
\[
\mathbb{E}\big[S_X\,G_i(S_X)|S_X>u\big]=\mathbb{E}\big[G_i(S_X)|S_X>u\big]\, \mathbb{E}[S_X|S_X>u]\quad  \textnormal{ for all }u\geq 0,
\]
from which we must have
\[
G_i(u)=\e[G_i(S_X)|\ S_X>u]\quad  \textnormal{ for all } u\geq 0.
\]
Hence, $G_i(u)\equiv\textnormal{const}$, which alongside the fact that $\e[G_i(S_X)]=0$, implies $G_i(u)\equiv 0$. Furthermore, as we assumed that $r_{q,i}=\tilde{r}_{q,i}$ for all $q\in [0,\ 1)$, we have $r_{0,i}=\tilde{r}_{0,i}$, and so
\[
\e[R_i|\, S_X]=\frac{\e[X_i]}{\e[S_X]}
\]
or, equivalently,
\[
\e[X_i|\, S_X]=\frac{\e[X_i]}{\e[S_X]}S_X
\]
for $i=1,\ldots,n$. Finally, we have the following implication in terms of the Laplace transform of the loss RV $S_X^{(\ast)_i}$ and for $i=1,\ldots,n$,
$$
\phi_{S_X^{(\ast)_{i}}}(t)=\frac{\mathbb{E}[X_i\,e^{-tS_X}]}{ \mathbb{E}[X_i]}=\frac{\mathbb{E}\big[\mathbb{E}[X_i|S_X]\, e^{-t\,S_X}\big]}{ \mathbb{E}[X_i]}=\frac{\mathbb{E}[S_X\, e^{-tS_X}]}{ \mathbb{E}[S_X]}=\phi_{S_X^{\ast}}(t),\quad \textrm{Re}(t)>0.
$$
This implies $S_X^{(\ast)_i}=_d S_X^{(\ast)_j}$ for all $1\leq i\neq j\leq n$ and so completes the `only if' direction of the theorem.

In order to prove the `if' direction of the theorem, let us assume that the distributional equality $S_X^{(\ast)_i}=_d S_X^{(\ast)_j}(=_dS_X^{\ast})$ holds for all $i=1,\ldots,n$, which means
\begin{align*}
\frac{\mathbb{E}[X_i\,e^{-t{S_X}}]}{ \mathbb{E}[X_i]}=\frac{\mathbb{E}[S_X\,e^{-t{S_X}}]}{\mathbb{E}[S_X]}
\end{align*}
or, equivalently,
\begin{align*}{\mathbb{E}\left[\frac{\mathbb{E}[X_i\,|\,{S_X}]}{ \mathbb{E}[X_i]}\,e^{-t{S_X}}\right]}={\mathbb{E}\left[\frac{S_X}{ \mathbb{E}[S_X]}\,e^{-t{S_X}}\right]},
\end{align*}
with the immediate implication
\begin{align*}
{\mathbb{E}[R_i|{S_X=u}]}={\frac{\e[X_i]}{\mathbb{E}[S_X]}}\quad  \textnormal{ for all }u\geq 0.
\end{align*}
Therefore, we necessarily have ${\mathbb{E}[R_i]}=\e[X_i]\,/\,\e[S_X]$, $i=1,\ldots,n$. Finally, we obtain the following string of equations:
\begin{align*}
\mathrm{Cov}(R_i,{S_X}\,|\,{S_X}>u)&=\mathbb{E}[R_i\,{S_X}\,|\,{S_X}>u]-\mathbb{E}[R_i\,|\,{S_X}>u]\, \mathbb{E}[{S_X}\,|\,{S_X}>u]
\\
&={\mathbb{E}\big[\mathbb{E}[R_i|{S_X}]\, {S_X}\big|{{S_X}>u}\big]}-{\mathbb{E}\big[\mathbb{E}[R_i|{S_X}]\big|{{S_X}>u}\big]}\,\mathbb{E}[{S_X}|{S_X}>u]
\\
&=\frac{\mathbb{E}[X_i]}{\mathbb{E}[{S_X}]}\ \mathbb{E}[{S_X}|{S_X}>u]-\frac{\mathbb{E}[X_i]}{\mathbb{E}[{S_X}]}\ \mathbb{E}[{S_X}|{S_X}>u]
\\
&=0
\end{align*}
for all $u\geq 0$ and $i=1,\ldots,n$.
The `if' direction of the theorem is then proved by evoking Proposition \ref{Covrel-prop}. This completes the proof of the theorem.
\end{proof}
Some properties of the portfolios of losses $\XX\in\mathfrak{W}$ are studied next. Specifically, it turns out that these portfolios are \textit{consistent} in the sense that the answer to Question \ref{question} must be in affirmative for all their sub-portfolios. This is formulated and proved next.
\begin{theorem}
\label{prop:split}
Consider the loss RV $\XX_+=(X_1,\ldots,X_n,Y_{n+1},\ldots,Y_{n+m})\in\mathcal{X}^{n+m}$ and assume that $\XX_+\in\mathfrak{W}$, then the sub-portfolios $(X_1,\ldots,X_n)$ and $(Y_{n+1},\ldots,Y_{n+m})$ also belongs to the set $\mathfrak{W}$.
\end{theorem}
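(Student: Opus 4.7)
By Theorem~\ref{thm:main}, the hypothesis $\XX_+\in\mathfrak{W}$ is equivalent to $S_+^{(\ast)_i}=_d S_+^{\ast}$ for every $i\in\{1,\ldots,n+m\}$, i.e.\ in Laplace-transform form, $\E[X_i\,e^{-tS_+}]/\E[X_i]$ does not depend on $i\in\{1,\ldots,n+m\}$ for every $t>0$. By Theorem~\ref{thm:main} applied in the reverse direction, the conclusion $(X_1,\ldots,X_n)\in\mathfrak{W}$ is equivalent to the corresponding statement with $S_+$ replaced by $S_X$ and $i$ ranging over $\{1,\ldots,n\}$. The entire task therefore reduces to passing from ``same $\E[X_i\,e^{-tS_+}]/\E[X_i]$'' to ``same $\E[X_i\,e^{-tS_X}]/\E[X_i]$,'' together with the analogous passage for the $Y$-sub-portfolio.

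\textbf{Execution.} The natural move is to factor $e^{-tS_+}=e^{-tS_X}\cdot e^{-tS_Y}$ and peel off the $e^{-tS_Y}$ piece. In the natural working setup where the two sub-portfolios $(X_1,\ldots,X_n)$ and $(Y_{n+1},\ldots,Y_{n+m})$ are mutually independent, the expectation factorises as
\[
\E[X_i\,e^{-tS_+}]=\E[X_i\,e^{-tS_X}]\cdot\E[e^{-tS_Y}],
\]
and because $\E[e^{-tS_Y}]$ does not depend on $i$, dividing by $\E[X_i]$ and equating across $i\in\{1,\ldots,n\}$ cancels this common factor, leaving
\[
\frac{\E[X_i\,e^{-tS_X}]}{\E[X_i]}=\frac{\E[X_j\,e^{-tS_X}]}{\E[X_j]},\qquad i,j\in\{1,\ldots,n\},\ t>0.
\]
Invoking Theorem~\ref{thm:main} in the converse direction delivers $(X_1,\ldots,X_n)\in\mathfrak{W}$, and a symmetric argument, swapping the roles of $\XX$ and $\YY$, then handles $(Y_{n+1},\ldots,Y_{n+m})\in\mathfrak{W}$.

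\textbf{Main obstacle.} The hard part is justifying the decoupling of $X_i$ from $S_Y$ when the two sub-portfolios are allowed to be dependent. In that generality, the hypothesis only controls the diagonal $s=t$ of the bivariate Laplace transform $(t,s)\mapsto\E[X_i\,e^{-tS_X-sS_Y}]/\E[X_i]$, whereas the desired conclusion requires control at $s=0$; diagonal agreement of two such bivariate transforms does not in itself propagate off the diagonal. To bridge this gap I would try to combine the conditional-mean identities $\E[X_i\mid S_+]=(\E[X_i]/\E[S_+])\,S_+$ and $\E[S_Y\mid S_+]=(\E[S_Y]/\E[S_+])\,S_+$ (both supplied by Theorem~\ref{thm:main}) with an application of Lemma~\ref{Partsb-lemma} that strips off the $Y$-coordinates, a step I anticipate to be the main technical hurdle.
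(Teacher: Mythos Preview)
Your overall strategy---reduce to the Laplace-transform characterisation of Theorem~\ref{thm:main} and then strip off the $S_Y$-contribution---matches the paper's. Your execution under mutual independence of the two sub-portfolios is correct and is essentially the special case of the paper's argument in which conditioning on $S_Y$ collapses to a multiplicative factor $\phi_{S_Y}(t)$ that cancels.

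The paper, however, does \emph{not} assume independence. Its proof conditions on $S_Y$ to obtain
\[
\E\!\left[e^{-tS_Y}\,\E\!\left[\tfrac{X_i}{\E[X_i]}e^{-tS_X}\,\Big|\,S_Y\right]\right]
=\E\!\left[e^{-tS_Y}\,\E\!\left[\tfrac{X_j}{\E[X_j]}e^{-tS_X}\,\Big|\,S_Y\right]\right],\qquad i,j\le n,
\]
and then asserts, without further justification, that the inner conditional expectations themselves coincide; taking expectations then finishes via the tower property. But this inference is exactly the ``diagonal-to-off-diagonal'' passage you isolate as the main obstacle: the inner quantity already depends on the same parameter $t$ that appears in the outer weight $e^{-tS_Y}$, so Laplace-transform uniqueness in the $S_Y$-variable does not apply. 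In other words, the paper glosses over precisely the gap you identify; your diagnosis is sharper than the paper's own argument. Your proposed remedy via the conditional-mean identities $\E[X_i\mid S_+]=\bigl(\E[X_i]/\E[S_+]\bigr)S_+$ is a reasonable line of attack, but neither your sketch nor the paper's proof actually closes this gap in the fully dependent case.
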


\begin{proof}
We prove that if $\mathbf{X}_+\in\mathfrak{W}$, then $\XX=(X_1,\ldots,X_n)\in \mathfrak{W}$; the case $\YY=(Y_{n+1},\ldots,Y_{n+m})\in \mathfrak{W}$ follows in the same fashion.
Let $S_X=\sum_{i=1}^n X_i$, $S_Y=\sum_{i=n+1}^{n+m} Y_i$ and $S_+=S_X+S_Y$, as in Lemma \ref{Partsb-lemma}.
Because $\XX_+\in\mathfrak{W}$ and by Theorem \ref{thm:main}, we have, for all $i\neq j\in\{1,\ldots,n\}$ and ${\rm Re}(t)>0$,
\begin{align}
\phi_{S_+^{(\ast)_i}}(t)  :=\e\big[\exp\{-t\, S_+^{(\ast)_i}\}\big] & = \frac{\e\big[X_i\,  e^{-t(S_X+S_Y)}\big]}{\e[X_i]}
  \nonumber \\[2mm]
  & =\frac{\e\big[X_j\, e^{-t(S_X+S_Y)}\big]}{\e[X_j]}=\e\big[\exp\{-t\, S_+^{(\ast)_j}\}\big]=:\phi_{S_+^{(\ast)_j}}(t).
\end{align}
Therefore, we have
\begin{align*}
\e\Bigg[e^{-t\, S_Y}\,  \e\left[
\frac{X_i}{\e[X_i]}e^{-tS_X}\Big|\, S_Y
\right]\Bigg]=\e\Bigg[e^{-t\,S_Y}\, \e\left[
\frac{X_j}{\e[X_j]}e^{-tS_X}\Big|\, S_Y
\right]\Bigg]\quad \mbox{ for all ${\rm Re}(t)>0$},
\end{align*}
from which we can conclude
\begin{align*}
\e\left[
\frac{X_i}{\e[X_i]}\,e^{-tS_X}\Big|\, S_Y
\right]=\e\left[
\frac{X_j}{\e[X_j]}\,e^{-tS_X}\Big|\, S_Y
\right]\quad \mbox{ for all ${\rm Re}(t)>0$}.
\end{align*}
The assertion follows by the law of total expectation and evoking again Theorem \ref{thm:main}.
\end{proof}

Theorem \ref{prop:split} remains true if a \textit{split} results in more than two loss portfolios and implies that, when starting with a loss portfolio in the set $\mathfrak{W}$, the {split} operation yields loss portfolios that are also in the set $\mathfrak{W}$.

The next result emphasizes that the \textit{merge} operation - an opposite of split - is more intricate, but that the {merge} of loss portfolios belonging to the set $\mathfrak{W}$ may result in a loss portfolio that also belongs to the set $\mathfrak{W}$.
\begin{theorem}
\label{merge-prop}
\allowbreak
Consider {{two}} independent loss portfolios, $(X_1,\ldots,X_n)\in\mathfrak{W}$ and $(Y_{n+1},\ldots,Y_{n+m})\in\mathfrak{W}$, and denote by $S_X$ and $S_Y$ the corresponding sums of coordinates. Also, let $\XX_+=(X_1,\ldots,X_n,Y_{n+1}, \allowbreak\ldots,Y_{n+m})\in\mathcal{X}^{n+m}$ be the merged portfolio. Then, $\XX_+\in\mathfrak{W}$ if and only if, for $i\in\{1,\ldots,n\}$ and $j\in\{n+1,\ldots,n+m \}$,
\begin{align}
\label{eq:merge}
\frac{\phi_{S_X^{(\ast)_i}}(t)}{\phi_{S_Y^{(\ast)_j}}(t)}=\frac{\phi_{S_X}(t)}{\phi_{S_Y}(t)},\quad \mbox{${\rm Re}(t)>0$}.
\end{align}
\end{theorem}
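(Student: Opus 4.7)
The plan is to reduce the problem to the Laplace transform characterization of $\mathfrak{W}$ given by Theorem \ref{thm:main}, and then to exploit the independence between the two sub-portfolios to factor the Laplace transforms of the partial size-biased sums. By Theorem \ref{thm:main}, $\XX_+\in\mathfrak{W}$ is equivalent to $S_+^{(\ast)_k}\stackrel{d}{=}S_+^{(\ast)_\ell}$ for all coordinates $k,\ell\in\{1,\ldots,n+m\}$, or, in terms of Laplace transforms, $\phi_{S_+^{(\ast)_k}}(t)=\phi_{S_+^{(\ast)_\ell}}(t)$ for $\mathrm{Re}(t)>0$.

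First I would handle the two possible coordinate types separately. For $i\in\{1,\ldots,n\}$, using the assumed independence of $(X_1,\ldots,X_n)$ and $(Y_{n+1},\ldots,Y_{n+m})$, I would write
\begin{align*}
\phi_{S_+^{(\ast)_i}}(t)=\frac{\E\big[X_i\,e^{-t(S_X+S_Y)}\big]}{\E[X_i]}=\frac{\E\big[X_i\,e^{-tS_X}\big]}{\E[X_i]}\cdot\E\big[e^{-tS_Y}\big]=\phi_{S_X^{(\ast)_i}}(t)\cdot \phi_{S_Y}(t).
\end{align*}
The symmetric computation for $j\in\{n+1,\ldots,n+m\}$ yields $\phi_{S_+^{(\ast)_j}}(t)=\phi_{S_X}(t)\cdot\phi_{S_Y^{(\ast)_j}}(t)$.

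Next, I would invoke the hypothesis that each of the sub-portfolios already lies in $\mathfrak{W}$. By Theorem \ref{thm:main} applied to these sub-portfolios, the Laplace transforms $\phi_{S_X^{(\ast)_i}}$ do not depend on $i\in\{1,\ldots,n\}$, and likewise $\phi_{S_Y^{(\ast)_j}}$ do not depend on $j\in\{n+1,\ldots,n+m\}$. Consequently, the equality $\phi_{S_+^{(\ast)_k}}=\phi_{S_+^{(\ast)_\ell}}$ need only be verified for one pair of \emph{cross} indices, namely one $i\in\{1,\ldots,n\}$ and one $j\in\{n+1,\ldots,n+m\}$. Substituting the factored expressions above gives
\begin{align*}
\phi_{S_X^{(\ast)_i}}(t)\cdot \phi_{S_Y}(t)=\phi_{S_X}(t)\cdot \phi_{S_Y^{(\ast)_j}}(t),\quad \mathrm{Re}(t)>0,
\end{align*}
which rearranges precisely to \eqref{eq:merge}; this simultaneously yields the ``only if'' and ``if'' directions via Theorem \ref{thm:main}.

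I do not expect any serious obstacle, since the only subtle point is recognizing that independence of the two portfolios is exactly what allows the factorization $\E[X_i e^{-t(S_X+S_Y)}]=\E[X_i e^{-tS_X}]\,\E[e^{-tS_Y}]$, and this is where all the work happens. Everything else is a clean algebraic rearrangement combined with a single appeal to Theorem \ref{thm:main} in both directions.
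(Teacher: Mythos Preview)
Your proposal is correct and follows essentially the same route as the paper: reduce membership in $\mathfrak{W}$ to the Laplace-transform criterion of Theorem~\ref{thm:main}, factor $\phi_{S_+^{(\ast)_i}}$ and $\phi_{S_+^{(\ast)_j}}$ via independence, use the $\mathfrak{W}$-hypothesis on the sub-portfolios to make the within-block equalities automatic, and observe that the cross-block equality is exactly \eqref{eq:merge}. The only cosmetic difference is that the paper invokes Lemma~\ref{Partsb-lemma} for the factorization step whereas you compute it directly; the content is identical.
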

\begin{proof}
Let $S_+=S_X+S_Y$.  We need to show that $S_+^{(\ast)_i}=_d S_+^{(\ast)_j}$ for all $i\neq j\in\{1,\ldots,n+m\}$. First, consider the case in which the indices $i,j$ belong to either one of the sets $\{1,\ldots,n\}$ or  $\{n+1,\ldots,n+m\}$, say $i\in \{1,\ldots,n\}$ and $j\in \{n+1,\ldots,n+m\}$.  Then by Lemma \ref{Partsb-lemma} with the addition of the independence assumption and since $(X_1,\ldots,X_n)\in\mathfrak{W}$ and $(Y_{n+1},\ldots,Y_{n+m})\in\mathfrak{W}$, we have
\begin{align*}
\phi_{S_+^{(\ast)_i}}(t)&=\phi_{S_X^{(\ast)_i}}(t)\times \phi_{S_Y}(t) =\phi_{S_Y^{(\ast)_j}}(t)\times \phi_{S_X}(t)=\phi_{S_+^{(\ast)_j}}(t)
\end{align*}
for all ${\rm Re}(t)>0$ if and only if Equation \eqref{eq:merge} is valid.

The case when $i\neq j$ are both in $\{1,\ldots,n\}$ or both in $\{n+1,\ldots,n+m\}$ follows similarly. This completes the proof of the assertion.
\end{proof}


The assertion that concludes this section reveals that an \textit{amalgamation} - on a BU basis - of a collection of loss portfolios, each of which belongs to the set $\mathfrak{W}$, may result in a loss portfolio that also belongs to the set $\mathfrak{W}$.
\begin{theorem}
\label{aggregate-prop}
Consider {{two}} independent loss portfolios, $(X_1,\ldots,X_n)\in\mathfrak{W}$ and $(Y_{1},\ldots,Y_{n})\in\mathfrak{W}$. Let $\SS=(S_1,\ldots,S_n)$, where $S_i=X_i+Y_i$, $i=1,\ldots,n$.  Then $\SS\in\mathfrak{W}$ if and only if
\begin{align}
\label{eq-aggregate-prop}
\frac{\e[X_{i}]}{\e[X_{j}]}=\frac{\e[Y_{i}]}{\e[Y_{j}]}\quad \mbox{for all $i\neq j\in\mathcal{N}$}.
\end{align}
\end{theorem}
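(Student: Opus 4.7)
The plan is to follow the same Laplace-transform playbook as in the proof of Theorem~\ref{merge-prop}. With $T:=S_X+S_Y=\sum_{i=1}^n S_i$, Theorem~\ref{thm:main} tells us that $\SS\in\mathfrak{W}$ if and only if $T^{(\ast)_{S_i}}=_{d}T^{(\ast)_{S_j}}$ for all $i\neq j\in\mathcal{N}$, so the entire argument reduces to comparing the Laplace transforms $\phi_{T^{(\ast)_{S_i}}}(t)$ across $i$.

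Using Definition~\ref{def:multi-SB} together with the independence of $\XX$ and $\YY$, I would expand
\begin{align*}
\phi_{T^{(\ast)_{S_i}}}(t) &= \frac{\e\bigl[(X_i+Y_i)\,e^{-t(S_X+S_Y)}\bigr]}{\e[X_i]+\e[Y_i]} \\
&= \frac{\e[X_i]\,\phi_{S_X^{(\ast)_i}}(t)\,\phi_{S_Y}(t)+\e[Y_i]\,\phi_{S_X}(t)\,\phi_{S_Y^{(\ast)_i}}(t)}{\e[X_i]+\e[Y_i]}.
\end{align*}
Now I would invoke Theorem~\ref{thm:main} twice: because $\XX\in\mathfrak{W}$ and $\YY\in\mathfrak{W}$, the partial size-biased Laplace transforms collapse to $\phi_{S_X^{(\ast)_i}}(t)=\phi_{S_X^{\ast}}(t)$ and $\phi_{S_Y^{(\ast)_i}}(t)=\phi_{S_Y^{\ast}}(t)$, which are free of $i$. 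Setting $p_i:=\e[X_i]/(\e[X_i]+\e[Y_i])$, this leaves the two-atom convex combination
\[
\phi_{T^{(\ast)_{S_i}}}(t)=p_i\,\phi_{S_X^{\ast}}(t)\phi_{S_Y}(t)+(1-p_i)\,\phi_{S_X}(t)\phi_{S_Y^{\ast}}(t).
\]

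For the ``if'' direction, condition \eqref{eq-aggregate-prop} says that the vectors $(\e[X_1],\ldots,\e[X_n])$ and $(\e[Y_1],\ldots,\e[Y_n])$ are proportional, hence $p_i$ is constant in $i$ and the Laplace transform above does not depend on $i$; Theorem~\ref{thm:main} then delivers $\SS\in\mathfrak{W}$. For the ``only if'' direction, subtracting the displayed identities for two indices gives
\[
(p_i-p_j)\bigl(\phi_{S_X^{\ast}}(t)\phi_{S_Y}(t)-\phi_{S_X}(t)\phi_{S_Y^{\ast}}(t)\bigr)=0,\qquad {\rm Re}(t)>0,
\]
so that, whenever the bracketed factor is not identically zero, $p_i=p_j$ for all $i,j$, which is readily seen to be equivalent to \eqref{eq-aggregate-prop}.

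The one delicate point is the degenerate regime in which the bracketed factor above vanishes on $\{{\rm Re}(t)>0\}$: via the identity $\phi_{S_X^{\ast}}(t)=-\phi_{S_X}'(t)/\e[S_X]$, this is equivalent to $(\log\phi_{S_X}(t))/\e[S_X]\equiv(\log\phi_{S_Y}(t))/\e[S_Y]$, i.e.\ the distributions of $S_X$ and $S_Y$ are powers of a common infinitely divisible base. In that structurally coincident case the weights $p_i$ are unidentifiable from the mixture, and I would either flag the equivalence as holding up to this coincidence or impose a mild non-degeneracy hypothesis -- in the spirit of the explicit ratio condition \eqref{eq:merge} in Theorem~\ref{merge-prop} -- to rule it out.
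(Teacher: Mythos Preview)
Your approach is essentially identical to the paper's: the paper also expands $\phi_{S_+^{(\ast)_{i}}}(t)=\e[S_i\,e^{-tS_+}]/\e[S_i]$ via Lemma~\ref{Partsb-lemma} and independence into the same two-term mixture with weights $\e[X_i]/\e[S_i]$ and $\e[Y_i]/\e[S_i]$, then appeals to Theorem~\ref{thm:main}.

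Where you go further than the paper is in the ``only if'' direction. The paper simply asserts that Condition~\eqref{eq-aggregate-prop} ``is required'' for the Laplace transforms to agree across $i$, without isolating the degenerate case you flag. Your caution is well placed: the bracketed factor $\phi_{S_X^{\ast}}\phi_{S_Y}-\phi_{S_X}\phi_{S_Y^{\ast}}$ vanishes identically precisely when $\phi_{S_X}=\phi_{S_Y}^{\e[S_X]/\e[S_Y]}$ --- i.e.\ the condition of Theorem~\ref{prop:independent-DSI} for the independent pair $(S_X,S_Y)$ --- and in that regime the stated equivalence genuinely fails. A concrete counterexample: take $X_1,X_2,Y_1,Y_2$ mutually independent Gamma with common scale $\theta$ and shapes $1,2,1,3$ respectively; then $\XX,\YY\in\mathfrak{W}$ by Example~\ref{ex:independent}, and $S_1,S_2$ are independent ${\rm Gamma}(2,\theta)$ and ${\rm Gamma}(5,\theta)$, so $\SS\in\mathfrak{W}$ as well, yet $\e[X_1]/\e[X_2]=1/2\neq 1/3=\e[Y_1]/\e[Y_2]$. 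So your instinct to append a non-degeneracy hypothesis is not mere tidying --- it is needed for the ``only if'' direction to hold as stated.
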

\begin{proof}
Let $S_X=X_1+\cdots+X_n$, $S_Y=Y_1+\cdots+Y_n$, and $S_+=S_X+S_Y$.  By Lemma \ref{Partsb-lemma} with the addition of the independence assumption, we have, for $i=1,\ldots,n$,
\begin{align*}
 \phi_{S_+^{(\ast)_{i}}}(t)=\frac{\e\big[S_i\, e^{-t\, S_+}\big]}{\e[S_i]}=\frac{\e[X_i]}{\e[S_i]}\,\phi_{S_X^{(\ast)_{i}}}(t) \,
 \phi_{S_Y^{(\ast)}}(t) + \frac{\e[Y_i]}{\e[S_i]}\,\phi_{S_Y^{(\ast)_{i}}}(t) \,
 \phi_{S_X^{(\ast)}}(t)
 \quad \mbox{ for all ${\rm Re}(t)>0$}.
\end{align*}
By Theorem \ref{thm:main}, Condition \eqref{eq-aggregate-prop} is required so that the equality $\phi_{S_+^{(\ast)_{i}}}(t)=\phi_{S_+^{(\ast)_{j}}}(t)$ holds for all ${\rm Re}(t)>0$ and $i\neq j\in \mathcal{N}$. This completes the proof of the assertion.
\end{proof}

{We conclude this sub-section with the note that Theorems \ref{merge-prop} and \ref{aggregate-prop} stay valid even if more than two loss portfolios are considered. Specifically, consider $m(\in \mathbb{N})$ loss portfolios $\XX_i=(X_{i,1},\ldots,X_{i,n_i})\in \mathcal{X}^{n_i}$ each have $n_i$ BUs, where $n_i\in \mathbb{N}$, $i=1,\ldots,m$.  Assume that each portfolio $\XX_i\in \mathfrak{W}$, and $(\XX_1,\ldots,\XX_m)$ are mutually independent.  Let $S_{\XX_i}=X_{i,1}+\ldots+X_{i,n_i}$, then $\XX_+=(\XX_1,\ldots,\XX_m)\in \mathfrak{W}$ if and only if
\begin{align*}
 {\phi_{S_{\XX_i}^{(\ast)_{k_i}}}(t)\over \phi_{S_{\XX_j}^{(\ast)_{k_j}}}(t)}=  {\phi_{S_{\XX_i}}(t)\over\phi_{S_{\XX_j}}(t)}\quad \mbox{for all $k_i\in \{1,\ldots,n_i\}$, $k_j\in \{1,\ldots,n_j\}$, $i\neq j\in\{1,\dots,m\},$}\ \mbox{${\rm Re}(t)>0$,}
\end{align*}
which is a multi-portfolio adjustment of Condition \eqref{eq:merge}.  Moreover, let $n_1=\cdots=n_m=n,$ $S_j=X_{1,j}+\cdots+X_{m,j}$, $j\in\{1,\ldots,n\}$, then $\SS=(S_1,\ldots,S_n)\in \mathfrak{W}$ if and only if
\begin{align*}
 \frac{\e[X_{1,i}]}{\e[X_{1,j}]}=\cdots=\frac{\e[X_{m,i}]}{\e[X_{m,j}]}\quad \mbox{for all $i\neq j\in \{1,\ldots,n\}$,}
\end{align*}
which is a multi-portfolio adjustment of Condition \eqref{eq-aggregate-prop}.
}


\subsection{Examples and further elaborations}
In this section, we review a few examples of those loss RVs, $\XX\in\mathcal{X}^n$, for which the equality $r_{q,i}=\tilde{r}_{q,i}$ holds for all $q\in[0,\ 1)$ and $i\in\mathcal{N}$. That is, we now construct a few examples of the loss portfolios $\XX\in\mathcal{X}^n$, for which the RC allocations induced by the CTE risk measure reflect the diminishing impact of large losses on the insurers' perception of risk.

Our first example is the Liouville distributions (e.g., \citeauthor{Gupta1987}, \citeyear{Gupta1987}, for a comprehensive treatment, and \citeauthor{Hua2016}, \citeyear{Hua2016}; \citeauthor{McNeil2010a}, \citeyear{McNeil2010a}, for applications in the context of dependence modelling).  To start with, for $\gamma\in\mathbb{R}_+$, denote by $\Gamma(\gamma)$ the complete gamma function, that is
\begin{align*}
\Gamma(\gamma)=\int_0^\infty x^{\gamma-1}e^{-x} dx.
\end{align*}
Also, for $\gamma_1,\ldots,\gamma_n\in\mathbb{R}_+$ and $\gamma_{\bullet}:=\gamma_1+\cdots+\gamma_n$, define the multivariate Beta function as
\begin{align*}
B(\gamma_1,\ldots,\gamma_n)=\frac{\prod_{j=1}^n \Gamma(\gamma_j)}{\Gamma(\gamma_{\bullet})}.
\end{align*}
\begin{example}[]
\label{Liouville-ex}
The positive and absolutely-continuous RV, $\XX=(X_1,\ldots,X_n)\in\mathcal{X}^n$, is said to be distributed Inverted-Dirichlet, succinctly $\XX\sim {\rm ID}_n(\gamma_1,\ldots,\gamma_n,\beta)$ with the parameters $\beta,\gamma_1,\ldots,\gamma_n\in\mathbb{R}_+$, if its probability density function (PDF) is:
\begin{align*}
\label{Liouville-pdf-eq}
f_{\XX}(x_1,\ldots,x_n)=\frac{1}{B(\gamma_1,\ldots,\gamma_n,\beta)}\,\prod_{j=1}^n x_j^{\gamma_j-1}\left(1+\sum_{j=1}^n x_j\right)^{-(\gamma_{\bullet}+\beta)},\quad x_1,\ldots,x_n\in\mathbb{R}_+,
\end{align*}
\citep[e.g.,][for a general discussion and applications in actuarial science, respectively]{Gupta.1996,Ignatov2004}.

It is not difficult to show that $\phi_{S^{(\ast)_i}}(t)=\phi_{S^{(\ast)_j}}(t),\ {\rm Re}(t)>0$ for all $1\leq i\neq j\leq n$, and hence by Theorem \ref{thm:main}, we have $r_{q,i}={\gamma_i}/{
\gamma_{\bullet}}=\tilde{r}_{q,i}$
for $q\in [0,\ 1)$ and $i\in\mathcal{N}$.
\end{example}

An interesting observation that paves the way for a fairly general proposition, which is stated next, is that for $\XX\sim {\rm ID}_n(\gamma_1,\ldots,\gamma_n,\beta)$, we have the stochastic representation $X_j=Z\times Y_j,\ j=1,\ldots,n$, where the RV $Z=\sum_{j=1}^n X_j$ has a univariate inverted beta distribution, $Z\sim IB(\gamma_\bullet,\beta)$, with the parameters $\gamma_\bullet,\beta\in\mathbb{R}_+$, and the RV $\YY=(Y_1,\ldots,Y_n)$, independent on the RV $Z$, is distributed multivariate Dirichlet \citep{Ng2011}.

The proof of the following assertion is readily obtained via the routine conditioning and then evoking Theorem \ref{thm:main} and is thus omitted.
\begin{proposition}
\label{prop:lous}
Let the RV $\YY=(Y_1,\ldots,Y_n)$ be independent on the RV $Z$ and such that, for a constant $b\in\Re_+$, the equality, $\sum_{i=1}^n Y_i= b$, holds almost surely. Further, let the loss portfolio  $\XX=(X_1,\ldots,X_n)\in\mathcal{X}^n$ admit the stochastic representation $X_j=Y_j\times Z,\ j\in\mathcal{N}$, then $\XX\in\mathfrak{W}$.
\end{proposition}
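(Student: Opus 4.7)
The plan is to verify the hypothesis of Theorem \ref{thm:main} directly, namely that $S_X^{(\ast)_i} =_d S_X^{(\ast)_j}$ for every pair $i\neq j\in\mathcal{N}$, by computing the Laplace transforms explicitly and showing they coincide regardless of the index.

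First I would exploit the almost sure constraint $\sum_{i=1}^n Y_i = b$ to observe that the aggregate loss collapses to a one-dimensional object:
\[
S_X = \sum_{j=1}^n X_j = Z\sum_{j=1}^n Y_j = bZ \quad \text{almost surely}.
\]
This is the key simplification and essentially reduces the multivariate size-biasing question to a single Laplace transform evaluation.

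Next, for each $i\in\mathcal{N}$ and ${\rm Re}(t)>0$, I would use Definition \ref{def:multi-SB} and the independence of $\YY$ and $Z$ to factor the expectations:
\[
\phi_{S_X^{(\ast)_i}}(t)=\frac{\mathbb{E}[X_i e^{-tS_X}]}{\mathbb{E}[X_i]}=\frac{\mathbb{E}[Y_i Z e^{-tbZ}]}{\mathbb{E}[Y_i]\,\mathbb{E}[Z]}=\frac{\mathbb{E}[Y_i]\,\mathbb{E}[Z e^{-tbZ}]}{\mathbb{E}[Y_i]\,\mathbb{E}[Z]}=\frac{\mathbb{E}[Z e^{-tbZ}]}{\mathbb{E}[Z]}.
\]
Since the right-hand side is independent of $i$, the distributional equality $S_X^{(\ast)_i}=_d S_X^{(\ast)_j}$ holds for every $i\neq j\in\mathcal{N}$, and an application of Theorem \ref{thm:main} yields $\XX\in\mathfrak{W}$ with the proportional allocations reducing to $r_{q,i}=\tilde{r}_{q,i}=\mathbb{E}[Y_i]/b$.

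There is essentially no hard step here once the observation $S_X=bZ$ is made; the only thing to be mildly careful about is the implicit integrability of $X_i$, which follows from $X_i = Y_i Z$ with $Y_i\in[0,b]$ almost surely (bounded by $b$) and $Z\in L^1$ being needed for the CTE-based and GTE-based allocations to be well-defined and finite. One could equally well bypass Laplace transforms altogether and verify $\mathbb{E}[X_i\mid S_X]=(\mathbb{E}[X_i]/\mathbb{E}[S_X])\,S_X$ directly from $\mathbb{E}[X_i\mid S_X]=\mathbb{E}[Y_i]Z=(\mathbb{E}[Y_i]/b)S_X$, but the Laplace-transform route aligns most transparently with Theorem \ref{thm:main} as stated.
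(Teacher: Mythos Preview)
Your proposal is correct and follows precisely the route the paper indicates: the paper omits the proof, stating only that it ``is readily obtained via the routine conditioning and then evoking Theorem~\ref{thm:main},'' and your computation of $\phi_{S_X^{(\ast)_i}}(t)$ via the factorization afforded by independence of $\YY$ and $Z$ together with the identity $S_X=bZ$ is exactly that routine conditioning. Your closing remark that one could instead verify $\mathbb{E}[X_i\mid S_X]=(\mathbb{E}[X_i]/\mathbb{E}[S_X])S_X$ directly is also a legitimate shortcut, equivalent to the Laplace-transform criterion by the `if' direction of Theorem~\ref{thm:main}.
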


Proposition \ref{prop:lous} implies that the loss RVs $X_1,\ldots,X_n$ that admit the Multiplicative Background Risk Model (MBRM) stochastic representation with the idiosyncratic risk factors (RFs), $Y_1,\ldots,Y_n$  distributed Dirichlet with parameters $\gamma_1,\ldots,\gamma_n\in\mathbb{R}_+$ and the systemic RF $Z$ having the PDF $ f_Z$, such that
\begin{equation}
\label{eq:PDF_Z}
    f_Z(z) \propto g(z)\, z^{\gamma_\bullet-1},\ z\in\mathbb{R}_+,
\end{equation}
where $\gamma_{\bullet}=\sum_{i=1}^n\gamma_i$, the function $z\mapsto g(z)$ is positive, continuous and integrable in the sense of \cite{Gupta1987}, all belong to the set $\mathfrak{W}$. Some examples, in addition to the already-mentioned inverted beta distribution, of the probability distribution of the systemic RF, $Z$, are: the gamma distribution and the generalized mixture of exponential distributions.

The class of multivariate probability distributions that admit the stochastic representation described in Proposition  \ref{prop:lous} is called the class of \textit{Liouville distributions}, and these distributions are one way to extend the multivariate Dirichlet distribution to the unbounded domain, $\mathbb{R}_+^n$. Another way is via the class of mixed-Gamma  (MG) distributions, which has recently been presented and studied in \cite{Furman2019a}. Speaking briefly and avoiding unnecessary technicalities - thus considering the simplest possible case - the loss RV $\XX=(X_1,\ldots,X_n)$ is said to be distributed $n$-variate MG distribution if it has the PDF:
\begin{align}
\label{eq:MMG-PDF-diff}
    f_{\XX}(x_1,\ldots,x_n)= \sum_{k=1}^m p_k\, \prod_{i=1}^n \frac{x_i^{\gamma_{i,k}-1}}{\Gamma(\gamma_{i,k})\, \theta_i^{\gamma_{i,k}}}\, e^{-x_i/\theta_i},\quad x_1,\ldots,x_n\in\Re_+,
\end{align}
where $\gamma_{i,k}\in\mathbb{R}_+$ and $\theta_i\in\mathbb{R}_+$ are, respectively, the shape and scale parameters, and $p_k>0,\ k=1,\ldots,m$ are the mixture weights satisfying $\sum_{k=1}^m p_k=1$;
succinctly, we write $\XX\sim {\rm MG}_n(\boldsymbol{\gamma},\boldsymbol{\theta},\pp)$, where  $\boldsymbol{\gamma}$ and $\boldsymbol{\theta}$ are the $n\times m$- and $m$- dimensional vectors of shape and scale parameters, respectively, and $\pp=(p_1,\ldots,p_m)$.

The class of MG distributions is a generalization of the popular class of multivariate Erlang mixtures considered in \cite{Willmot2014}, albeit with (a) positive - and not positive and integer - shape parameters, and (b) possibly distinct -  and not all equal - scale parameters \citep[e.g.][]{Verbelen2016, Lee2012}. The class of MG distributions is connected to Question \ref{question} in the example and proposition that follow.

\begin{example}
\label{ex:MMG-same-scale}
Consider a loss portfolio $\XX\sim {\rm MG}_n(\boldsymbol{\gamma},\boldsymbol{\theta},\pp)$  with the PDF as per Equation (\ref{eq:MMG-PDF-diff}), but with $\theta_i\equiv \theta$. Then, for $i\in \mathcal{N}$, we have
\begin{align*}
\phi_{S_X^{(\ast)_i}}(t)=\e[\exp\{-t S_X^{(\ast)_i}\}]=\sum_{k=1}^m p^{(*)_i}_{k}\left(1+\theta t\right)^{-\gamma_{\bullet,k}-1},\quad {\rm Re}(t)>0,
\end{align*}
where
\begin{align*}
p^{(*)_i}_{k}=\frac{\gamma_{i,k}}{\sum_{k=1}^m \gamma_{i,k}\times p_k}\, p_k,\quad k=1,\ldots,m,
\end{align*}
which can be viewed as the  $i$-th partial
size-biased transform of the PMF underlying the stochastic shape parameters.
Consequently, for the equality $S_X^{(\ast)_i}=_d S_X^{(\ast)_j}$ to hold, we must require (due to Theorem \ref{thm:main})
\begin{align}\label{p-new-eq}
\frac{\gamma_{i,1}}{\left(\sum_{j=1}^n\gamma_{j,1}\right)}=\cdots=\frac{\gamma_{i,m}}{\left(\sum_{j=1}^n\gamma_{j,m}\right)}
\end{align}
for all $i\in\mathcal{N}$.
\end{example}

The observation presented in Example \ref{ex:MMG-same-scale} is strengthened in the following proposition, which concludes this section.
\begin{proposition}
\label{prop:MMG-single-theta}
Let $\XX\sim {\rm MG}_n(\boldsymbol{\gamma},\boldsymbol{\theta},\pp)$, then we have
 $\XX\in \mathfrak{W}$ if and only if both of $\theta_i\equiv \theta$ and Equation \eqref{p-new-eq} hold true.
\end{proposition}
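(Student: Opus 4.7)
The plan is to verify the `if' direction by appeal to Example \ref{ex:MMG-same-scale}, and to handle the `only if' direction by a singularity analysis of the explicit Laplace transform of $S_X^{(\ast)_i}$. For sufficiency, when $\theta_i\equiv\theta$ and \eqref{p-new-eq} both hold, the computation in Example \ref{ex:MMG-same-scale} shows that the mixing weights $p_k^{(\ast)_i}$ of the partial size-biased transform do not depend on $i$, so $S_X^{(\ast)_i}=_d S_X^{(\ast)_j}$ for every $i\neq j\in\mathcal{N}$ and Theorem \ref{thm:main} delivers $\XX\in\mathfrak{W}$.

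For necessity, my first step is to obtain a closed-form expression for $\phi_{S_X^{(\ast)_i}}(t)$ by exploiting the conditional-independence structure of \eqref{eq:MMG-PDF-diff}: given the mixing component $K=k$, the coordinates $X_l$ are independent $\Gamma(\gamma_{l,k},\theta_l)$. Evaluating $\E[X_i e^{-tS_X}]/\E[X_i]$ then yields
\[
\phi_{S_X^{(\ast)_i}}(t)=\frac{1}{(1+\theta_i t)\,\bar\gamma_i}\sum_{k=1}^m p_k\,\gamma_{i,k}\prod_{l=1}^n(1+\theta_l t)^{-\gamma_{l,k}},\qquad \bar\gamma_i:=\sum_{k=1}^m p_k\,\gamma_{i,k}.
\]
By Theorem \ref{thm:main}, the assumption $\XX\in\mathfrak{W}$ forces the right-hand side to be independent of $i$, and the key step is to deduce from this identity in $t$ that $\theta_l\equiv\theta$. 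To that end, I would set $\theta^\dagger:=\max_l\theta_l$, $L:=\{l:\theta_l=\theta^\dagger\}$ and $\Gamma^\ast:=\max_k\sum_{l\in L}\gamma_{l,k}$, and examine $\phi_{S_X^{(\ast)_i}}(t)$ as $t\to -1/\theta^\dagger$ from the right along the real axis. On that path the sum $\sum_{k}p_k\gamma_{i,k}\prod_l(1+\theta_l t)^{-\gamma_{l,k}}$ diverges like $(1+\theta^\dagger t)^{-\Gamma^\ast}$ with a strictly positive leading coefficient (each $(1-\theta_l/\theta^\dagger)^{-\gamma_{l,k}}$ with $l\notin L$ is a positive constant, and $p_k,\gamma_{i,k}>0$), so $\phi_{S_X^{(\ast)_i}}(t)$ blows up of order $\Gamma^\ast+\mathbbm{1}\{\theta_i=\theta^\dagger\}$. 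Were some $\theta_j$ strictly less than $\theta^\dagger$, picking any $i\in L$ together with this $j$ would yield two orders of divergence differing by $1$, contradicting $\phi_{S_X^{(\ast)_i}}\equiv\phi_{S_X^{(\ast)_j}}$. Hence $\theta_l\equiv\theta$, after which Example \ref{ex:MMG-same-scale} immediately forces condition \eqref{p-new-eq}.

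The step that I expect to demand the most care is the singularity bookkeeping just described: the shape parameters $\gamma_{l,k}$ need not be integers, so $(1+\theta^\dagger t)^{-\Gamma^\ast}$ is a branch-type rather than a pole-type singularity, and several $\theta_l$ may already coincide with $\theta^\dagger$. Working along the real half-line $t>-1/\theta^\dagger$ keeps every factor $(1+\theta_l t)$ positive, so the leading-order-dominates principle applies verbatim, and the strict positivity of the leading coefficient prevents any cancellation from rescuing the putative identity.
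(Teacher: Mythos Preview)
Your proof is correct. The `if' direction is identical to the paper's. For the `only if' direction, both you and the paper write out the explicit Laplace transform of $S_X^{(\ast)_i}$ and argue that the extra factor $(1+\theta_i t)^{-1}$ coming from the size-bias on coordinate $i$ makes the transforms incompatible when the $\theta_i$ are not all equal; once $\theta_i\equiv\theta$, both appeal to Example \ref{ex:MMG-same-scale} for \eqref{p-new-eq}.

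The difference is in how the incompatibility is extracted. The paper orders the shape parameters and asserts that the equality $\phi_{S_X^{(\ast)_i}}=\phi_{S_X^{(\ast)_j}}$ is impossible ``by comparing the $m$-th terms'' of the two mixtures, implicitly relying on uniqueness of a finite gamma-product mixture representation. You instead let $t\to(-1/\theta^\dagger)^+$ along the real axis and compare the orders of blow-up, which differ by exactly $\mathbbm{1}\{i\in L\}-\mathbbm{1}\{j\in L\}$. Your route is more self-contained: it makes explicit why no cancellation can occur (all contributing coefficients are strictly positive on that real half-line) and handles non-integer $\gamma_{l,k}$ and coinciding $\theta_l$'s without any separate grouping step. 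The paper's argument is terser but leaves the ``compare the $m$-th terms'' step to the reader; your singularity bookkeeping fills precisely that gap.
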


\begin{proof}
Example \ref{ex:MMG-same-scale} establishes the `if' direction. In order to prove the `only if' direction,  we  pursue proof by contradiction. To this end, consider $\XX\sim {\rm MG}_n(\boldsymbol{\gamma},\boldsymbol{\theta},\pp)$ in which the coordinates of the vector of parameters $\boldsymbol{\theta}=(\theta_1,\ldots,\theta_n)$ are all distinct, and suppose $\XX\in \mathfrak{W}$. (If some scale parameters were equal, then we would introduce the vector of distinct scales, $\boldsymbol{\widehat{\theta}}=(\widehat{\theta}_1,\ldots,\widehat{\theta}_{n'})$, $n'< n$ as well as, for $d=1,\ldots,n'$ and $\mathfrak{T}_d=\big\{i\in\{1,\ldots,n\}:\theta_i=\widehat{\theta}_d \big\}$,  the corresponding shape parameters  $\widehat{\gamma}_{d,l}=\sum_{i\in \mathfrak{T}_d}\gamma_{i,l},\ l=1,\ldots,m$. We would then proceed with the proof, as it is outlined below.)
Further, without loss of generality, assume that the shape parameters are ordered as ${\gamma}_{d,1}\le {\gamma}_{d,2}\le\cdots\le{\gamma}_{d,m}$, $d=1,\ldots,n$.

With the above in mind and for any BU, $j\in\mathcal{N}$, we have that the Laplace transform of the RV $S_X^{(\ast)_j}$ is:
\begin{align*}
\phi_{S_X^{(\ast)_j}}(t)&=\sum_{k=1}^m p^{(*)_j}_{k}\, \big(1+{\theta}_j\, t\big)^{-(1+{\gamma}_{j,k})}\prod_{d=1,d\neq j}^{n} \left(1+{\theta}_d\, t\right)^{-{\gamma}_{d,k}},\quad {\rm Re}(t)>0.
\end{align*}
Furthermore, as $\XX\in \mathfrak{W}$, we have that Theorem \ref{thm:main} implies, for $1\le i\neq j\le n$ and all ${\rm Re}(t)>0$,
\begin{align*}
\phi_{S_X^{(\ast)_{i}}}(t)=\phi_{S_X^{(\ast)_{j}}}(t).
\end{align*}
However, this is impossible, which is easily seen by comparing, e.g., the $m$-th terms of the Laplace transforms $\phi_{S_X^{(\ast)_{i}}}$ and $\phi_{S_X^{(\ast)_{j}}}$. Hence, we have arrived at a contradiction and the proposition is proved.
\end{proof}

\section{The case of independent losses}
\label{sec-ind}

In this section, we explore the loss portfolios $\XX\in\mathcal{X}^n$ that have independent constituents. Admittedly, the assumption of independence simplifies the problem postulated in Question \ref{question} considerably, yet nor it means that the RV $\RR=(R_1,\ldots,R_n)$ has independent coordinates, neither that the RVs $\RR$ and $S_X$ are independent, thus warranting a separate discussion.

\begin{theorem}
\label{prop:independent-DSI}
Assume that $\XX=(X_1,\ldots,X_n)\in\mathcal{X}^n$ is a portfolio of independent losses, then we have the equality $r_{q,i}=\tilde{r}_{q,i}(=\e[X_i]\,/\,\e[S])$ for all $q\in[0,\ 1)$ and $i\in\mathcal{N}$, if and only if the equality
\[
 \phi_{X_i}(t)= \left(\phi_{X_j}(t)\right)^{\e[X_i]/\e[X_j]}
\]
holds for all $i\neq j \in\mathcal{N}$ and ${\rm Re}(t)>0$.
\end{theorem}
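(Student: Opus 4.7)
The plan is to leverage Theorem \ref{thm:main}, which already reduces the equality $r_{q,i}=\tilde{r}_{q,i}$ for all $q\in[0,1)$ and $i\in\mathcal{N}$ to the distributional identities $S_X^{(\ast)_i}=_dS_X^{(\ast)_j}$ for $i\neq j$, or equivalently to the equality of their Laplace transforms on $\{t:\,{\rm Re}(t)>0\}$. The task therefore boils down to rewriting $\phi_{S_X^{(\ast)_i}}(t)$ explicitly in terms of the marginal Laplace transforms $\phi_{X_k}$ under independence, and then matching the resulting expressions across indices.

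Exploiting independence of $X_1,\ldots,X_n$, I would compute
\[
\phi_{S_X^{(\ast)_i}}(t)=\frac{\e[X_i\,e^{-tS_X}]}{\e[X_i]}=\frac{\e[X_i\,e^{-tX_i}]}{\e[X_i]}\prod_{k\neq i}\phi_{X_k}(t)=-\frac{\phi_{X_i}'(t)}{\e[X_i]\,\phi_{X_i}(t)}\,\phi_{S_X}(t),
\]
using $\e[X_ie^{-tX_i}]=-\phi_{X_i}'(t)$ and $\phi_{S_X}(t)=\prod_{k=1}^n\phi_{X_k}(t)$. Thus $\phi_{S_X^{(\ast)_i}}=\phi_{S_X^{(\ast)_j}}$ collapses, after dividing out the common factor $\phi_{S_X}(t)$ (which is non-zero on the right half-plane since the $X_k$ are positive), to
\[
\frac{1}{\e[X_i]}\,\frac{d}{dt}\log\phi_{X_i}(t)=\frac{1}{\e[X_j]}\,\frac{d}{dt}\log\phi_{X_j}(t),\quad {\rm Re}(t)>0.
\]

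To finish the `only if' direction, I would integrate this identity along a path from $0$ to $t$ in the right half-plane, using the initial condition $\log\phi_{X_k}(0)=\log 1=0$, obtaining $\e[X_j]\log\phi_{X_i}(t)=\e[X_i]\log\phi_{X_j}(t)$, i.e.\ $\phi_{X_i}(t)=(\phi_{X_j}(t))^{\e[X_i]/\e[X_j]}$, for all ${\rm Re}(t)>0$. The `if' direction then follows by reversing the same chain of equivalences: differentiating the hypothesized relation yields the equality of logarithmic derivatives above, which in turn gives $\phi_{S_X^{(\ast)_i}}(t)=\phi_{S_X^{(\ast)_j}}(t)$ for every pair $i\neq j$, so Theorem \ref{thm:main} delivers $r_{q,i}=\tilde{r}_{q,i}$.

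The only delicate step is the use of the complex logarithm, but this is routine: the Laplace transform of a positive RV is analytic and non-vanishing on the open right half-plane, so a principal branch of $\log\phi_{X_k}$ exists there and the antidifferentiation above is legitimate; alternatively, one can carry out the whole argument on the positive real axis (where $\phi_{X_k}(t)\in(0,1]$) and then invoke the identity theorem for analytic functions to extend the resulting relation to $\{{\rm Re}(t)>0\}$.
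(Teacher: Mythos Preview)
Your proof is correct and follows the same route as the paper's: reduce via Theorem~\ref{thm:main} and independence to the equality $\e[X_i]^{-1}\,\frac{d}{dt}\log\phi_{X_i}(t)=\e[X_j]^{-1}\,\frac{d}{dt}\log\phi_{X_j}(t)$, and then pass to $\phi_{X_i}=\phi_{X_j}^{\,\e[X_i]/\e[X_j]}$. The only cosmetic difference is that the paper invokes Pinelis' L'H\^opital-type rules for that last passage, whereas you integrate directly from $t=0$ using $\phi_{X_k}(0)=1$; your version is, if anything, the more elementary of the two.
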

\begin{proof}
By Theorem \ref{thm:main} and assuming that the RVs $X_1,\ldots,X_n$ are mutually independent, we have $\XX\in\mathfrak{W}$ if and only if the Laplace transforms of the RVs $X_i+X_j^\ast$ and $X_i^\ast +X_j$ agree for all $i\neq j\in\mathcal{N}$. That is, we must have
\begin{align*}
\frac{1}{\e[X_j]}\frac{\frac{d}{dt}\phi_{X_j}(t)}{\phi_{X_j}(t)}=\frac{1}{\e[X_i]}\frac{\frac{d}{dt}\phi_{X_i}(t)}{\phi_{X_i}(t)},
\end{align*}
for all $i\neq j\in\mathcal{N}$ and ${\rm Re}(t)>0$. This, in turn, is equivalent to
\begin{align}\label{thm-ind-eq-1}
\frac{\frac{d}{dt}\log \phi_{X_j}(t)}{\frac{d}{dt}\log \phi_{X_i}(t)}=\frac{\e[X_j]}{\e[X_i]},
\end{align}
implying, for all ${\rm Re}(t)>0$,
\begin{align}\label{thm-ind-eq-2}
\frac{\log \phi_{X_j}(t)}{\log \phi_{X_i}(t)}=\frac{\e[X_j]}{\e[X_i]}
\end{align}
by Pinelis' L'hopital-type rules. The fact that Equation \eqref{thm-ind-eq-2} leads to Equation \eqref{thm-ind-eq-1} is easy to check by routine differentiation in the latter equation. This completes the proof of the theorem.
\end{proof}

\subsection{Examples and further elaborations}
\label{ind-sex-subsec}

Examples of the loss portfolios $\XX\in\mathcal{X}^n$ that have independent constituents and also belong to the set $\mathfrak{W}$ are really numerous. For instance, consider losses $X_i$, $i\in\mathcal{N}$, that have infinitely divisible distributions and such that the condition in Theorem \ref{prop:independent-DSI} holds, then we have $\tilde{r}_{q,i}=r_{q,i}=\e[X_i]\,/\,\e[S_X]$ for any $q\in[0,\ 1)$. The next example enumerates some of the distributions of relevance, which play important roles in actuarial science and quantitative risk management.
\begin{example}
\label{ex:independent}
Assume that the portfolio of losses $\XX\in\mathcal{X}^n$ has independent constituents $X_1,\ldots,X_n$, then it belongs to the set $\mathfrak{W}$ given that these constituents have the following probability distributions:
\begin{itemize}
    \item $X_i\sim {\rm Negative}$-${\rm Binomial}(\beta_i,p)$, $\beta_i\in \Re_+$, $p\in(0,1)$, with mean $\e[X_i]=\beta_i\,(1-p)\,/\,p$ and Laplace transform  \begin{align*}
       \phi_{X_i}(t)= \Big(\frac{p}{1-(1-p)e^{-t}} \Big)^{\beta_i},\quad {\rm Re}(t)>0.
   \end{align*}
    \item $X_i\sim {\rm Gamma}(\gamma_i, \beta)$, $\gamma_i\in\Re_+$, $\beta\in \Re_+$, with mean $\e[X_i]=\gamma_i\,\beta$ and Laplace transform  \begin{align*}
        \phi_{X_i}(t)= (1+\beta\, t)^{-\gamma_i},\quad {\rm Re}(t)>0.
    \end{align*}
        \item $X_i\sim {\rm Inverse}$-${\rm Gaussian}(\mu_i, \mu_i^2)$, $\gamma_i\in\Re_+$, with mean $\e[X_i]=\mu_i$ and Laplace transform  \begin{align*}
        \phi_{X_i}(t)= \exp\Big\{\mu_i (1-\sqrt{1-2t}) \Big\},\quad {\rm Re}(t)>0.
    \end{align*}
\end{itemize}
\end{example}

\section{Further generalizations and afterthoughts}
\label{sec-afterthoughts}

We mentioned in Section \ref{Sec-2} that the CTE risk measure is a member of the class of weighted risk measures and that it induces the RC allocation $r_{q,i}$, $q\in[0,\ 1)$, $i\in\mathcal{N}$. In fact, a more encompassing class of risk measures - and hence a generalization of the CTE risk measure - can be defined as follows  \citep[][]{Furman2008a}. Let $v,w:[0,\ \infty)\rightarrow [0,\ \infty)$ be two (non-decreasing) functions, then the \textit{generalized weighted} risk measure is the map $H_{v,w}:\mathcal{X}\rightarrow [0,\ \infty)$, which, when well-defined and finite, is given by
\begin{align}
\label{eq:gen-w-rm}
H_{v,w}(X)=\frac{\e\big[v(X)\, w(X)\big]}{\e\big[w(X)\big]},\quad X\in\mathcal{X}.
\end{align}
For $w(x)=\mathbbm{1}\{x\geq \VaR_q(X)\}$ and $v(x)=x$, where $x\in[0,\ \infty)$ and $q\in [0,\ 1)$, we have that the generalized weighted risk measure reduces to the CTE risk measure.

Further, for $k\in\mathbb{N}$, set $v(x)=x^k,\ x\in[0,\ \infty)$ and keep the \textit{weight} function $x\mapsto w(x)$  equal the indicator function as before in order to emphasize the tail loss scenarios, then generalized weighted risk measure \eqref{eq:gen-w-rm} yields the $k$\textit{-th order} CTE risk measure. Furthermore, extending the notation in Section \ref{Sec-2}, let, for $q\in [0,\ 1),\ k\in\mathbb{N}$ and $i\in\mathcal{N}$ 
\[
\tilde{r}^k_{q,i}=\e[R_i^k|\ S>s_q]\quad \mbox{and}\quad {r}^k_{q,i}=\e[X_i^k|\ S>s_q]\,/\,\e[S^k|\ S>s_q].
\]
In general, the proportional $k$-th order CTE-based allocation ${r}^k_{q,i}$ is not fully-additive. Nevertheless, it is a meaningful quantity in quantitative risk management \citep[e.g.,][for elaborations and applications]{Furman2006,Kim2010,Landsmanetal2017}.

It is not difficult to see that, for a fixed $k\in\mathbb{N}$, the equality $\tilde{r}^k_{q,i}={r}^k_{q,i}$ holds for all $q\in [0,\ 1)$ and $i\in\mathcal{N}$, if an only if we have $\Cov(R_i^k, S^k\,|\, S>s_q)\equiv 0$. Therefore, it is natural to reformulate Question \ref{question} as follows.

\begin{question}
\label{question2}
For loss RVs $X_i\in L^k$, can we characterize those loss portfolios $\XX=(X_1,\ldots,X_n)\in\mathcal{X}^n$, for which the equality $\tilde{r}^k_{q,i}={r}^k_{q,i}$ holds for all $q\in[0,\ 1),\ i\in\mathcal{N}$, and a fixed $k\in\mathbb{N}$?
\end{question}

Question \ref{question2} seeks to characterize the RVs that belong to the set
\begin{align}
\label{eq:DSI-k}
  \mathfrak{W}_k=\big\{\XX=(X_1,\ldots,X_n)\in\mathcal{X}^n\ :\, r^k_{q,i}= \widetilde{r}^k_{q,i}\ \mbox{ for all $q\in [0,1)$ and $i\in\mathcal{N}$}\big\},
\end{align}
which we do next. To start off, note that if the equality $r^k_{q,i}= \widetilde{r}^k_{q,i}$ holds for all $q\in[0,\ 1)$, then setting $q=0$, implies that for all loss portfolios in the set $\mathfrak{W}_k$, we must have  $r^k_{q,i}= \widetilde{r}^k_{q,i}=\e[X_i^k]\,/\,\e[S_X^k]$.
\begin{theorem}
\label{thm:SB-character}
If the portfolio of losses $\XX=(X_1,\ldots,X_n)\in\mathcal{X}^n$ with $X_i\in L^k$ belongs to the set $\mathfrak{W}_k$ and so the equality $r^k_{q,i}=\tilde{r}^k_{q,i}(=\e[X_i^k]\,/\,\e[S_X^k])$ holds for all $q\in[0,\ 1),\ i\in\mathcal{N}$, and a fixed $k\in\mathbb{N}$, then $S_X^{(k)_i}=_d S_X^{(k)_j},\ 1\leq i\neq j\leq n$. The opposite direction does not hold.
\end{theorem}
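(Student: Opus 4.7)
The plan is to mirror the proof of Theorem~\ref{thm:main}, replacing $X_i$ and $S_X$ in the key steps by $X_i^k$ and $S_X^k$, and then to refute the converse via an explicit exchangeable counterexample. First, I would exploit the almost-sure identity $R_i^k\, S_X^k = X_i^k$ to extend Proposition~\ref{Covrel-prop} to $k$-th order moments, obtaining
$$r^k_{q,i} - \tilde r^k_{q,i} \;=\; \frac{\Cov(R_i^k,\, S_X^k \mid S_X > s_q)}{\e[S_X^k \mid S_X > s_q]},\qquad q\in[0,1),\ i\in\mathcal{N}.$$
Consequently, $\XX \in \mathfrak{W}_k$ is equivalent to $\Cov(R_i^k,\, S_X^k \mid S_X > u)=0$ for every $u\geq 0$ and every $i\in\mathcal{N}$.

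Next, setting $g_i(s)=\e[R_i^k\mid S_X=s]$, I would recast this covariance vanishing as the product rule $\e[g_i(S_X)\,S_X^k\mid S_X>u] = \e[g_i(S_X)\mid S_X>u]\,\e[S_X^k\mid S_X>u]$ for all $u\geq 0$, and then run the hazard-rate differentiation argument already deployed in Theorem~\ref{thm:main} with $h(s)=s^k$ in place of $h(s)=s$. It produces the implication $g_i(u)=\e[g_i(S_X)\mid S_X>u]$ for all $u\geq 0$, so $g_i$ is constant in $s$; plugging into the $q=0$ case of $r^k_{0,i}=\tilde r^k_{0,i}$ pins the constant down to $\e[R_i^k \mid S_X] = \e[X_i^k]/\e[S_X^k]$ almost surely, and hence $\e[X_i^k \mid S_X] = (\e[X_i^k]/\e[S_X^k])\, S_X^k$.

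The forward implication then drops out of a one-line Laplace-transform calculation: by Definition~\ref{def:multi-SB},
$$\phi_{S_X^{(k)_i}}(t) \;=\; \frac{\e[X_i^k e^{-t S_X}]}{\e[X_i^k]} \;=\; \frac{\e\!\big[\e[X_i^k\mid S_X]\, e^{-t S_X}\big]}{\e[X_i^k]} \;=\; \frac{\e[S_X^k e^{-t S_X}]}{\e[S_X^k]},\quad {\rm Re}(t)>0,$$
which does not depend on $i$, so $S_X^{(k)_i}=_d S_X^{(k)_j}$ for all $i\neq j\in\mathcal{N}$. For the negative claim on the converse, I would exhibit a two-dimensional counterexample: take $n=2$, $k=2$, and let $X_1,X_2$ be i.i.d. uniform on $[0,1]$. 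Exchangeability delivers $S_X^{(2)_1}=_d S_X^{(2)_2}$ for free, yet $r^2_{0,1}=\e[X_1^2]/\e[S_X^2]=(1/3)/(7/6)=2/7$, whereas a polar-type change of variables evaluates $\tilde r^2_{0,1}=\e[R_1^2]=1-\ln 2\approx 0.307$, so $r^2_{0,1}\neq \tilde r^2_{0,1}$ and hence $\XX\notin\mathfrak{W}_2$.

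The main obstacle is precisely the converse: in Theorem~\ref{thm:main} the reverse direction hinged on the linear identity $\sum_{i=1}^n X_i = S_X$, which allowed one to solve for a common multiplier $h(S_X)=S_X/\e[S_X]$ that pins down $\e[R_i\mid S_X]$; no analogous identity $\sum_i X_i^k = S_X^k$ is available for $k>1$, and this is exactly why an exchangeable construction--where the partial size-biased Laplace transforms trivially coincide by symmetry--can violate the allocation equality and thus furnishes a clean counterexample.
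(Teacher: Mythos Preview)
Your proposal is correct and follows essentially the same route as the paper: the forward implication is obtained by re-running the argument of Theorem~\ref{thm:main} with $R_i^k$, $S_X^k$ and $G_i^k(s)=\e[R_i^k\mid S_X=s]-\e[R_i^k]$ in place of $R_i$, $S_X$ and $G_i$, and the counterexample is the very same one the paper uses, namely $n=2$, $X_1,X_2$ i.i.d.\ ${\rm Uni}[0,1]$. The only cosmetic difference is that the paper disproves the converse by computing $\e[X_i^k\mid S_X=s]$ explicitly and observing that $\e[R_i^k\mid S_X=s]$ is non-constant for $k\neq 1$, whereas you check the single numerical inequality $r^2_{0,1}=2/7\neq 1-\ln 2=\tilde r^2_{0,1}$; both verifications are valid.
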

\begin{proof}
The proof follows the same argumentation as in Theorem \ref{thm:main} with the quantities
$S_X,\ R_i$ and $G_i(s)=\mathbb{E}[R_i|S_X=s]-\mathbb{E}[R_i]$ replaced with the quantities $S_X^k,\ R_i^k$ and $G^k_i(s)=\mathbb{E}[R_i^k|S_X=s]-\mathbb{E}[R_i^k],\ s\in[0,\ \infty)$.

To see that the distributional equality, $S_X^{(k)_i}=_d S_X^{(k)_j},\ 1\leq i\neq j\leq n$, does not imply $\XX\in\mathfrak{W}_k$, consider the RV $(X_1,X_2)\in\mathcal{X}^2$ that has independent and identically distributed constituents; $X_i\sim {\rm Uni}[0,\, 1],\ i=1,2$. Clearly, we have $S_X^{(k)_1}=_d S_X^{(k)_2},\ k\in\mathbb{N}$. Nevertheless, with some algebra we obtain, for $i\in\{1,\, 2\}$,
\begin{align*}
    \e[X_i^k|S=s]=\frac{s^k}{1+k}\mathbbm{1}_{\{0\leq s<1\}}+\frac{1-(s-1)^{k+1}}{(1+k)(2-s)}\mathbbm{1}_{\{1\leq s\leq 2\}}, \quad s\in\mathbb{R}_+,
\end{align*}
which implies $\e[R_i^k|\ S_X=s]\neq \rm{const}$ for $k\neq1$ and hence $\tilde{r}^k_{q,i}\neq r^k_{q,i}$. This completes the proof of the theorem.
\end{proof}

According to Theorem \ref{thm:SB-character}, if the equalities $r_{q,i}=\tilde{r}_{q,i}$ and $r^2_{q,i}=\tilde{r}^2_{q,i}$ hold for all $q\in[0,\ 1)$ and $i\in\mathcal{N}$, then we must have $\e[R_i|\ S_X=s]\equiv \rm{const}$ and $\e[R_i^2|\ S_X=s]\equiv \rm{const}$, respectively. Therefore, the fact $\XX\in\mathfrak{W}_1$ does not imply $\XX\in\mathfrak{W}_2$ (also due to the counter example in the proof of Theorem \ref{thm:SB-character}). Next example demonstrates that this statement, when formulated in the opposite direction, does not hold either.
\begin{example}
Consider again the MG distribution as per Example \ref{ex:MMG-same-scale}, i.e., let $\XX\sim {\rm MG}_n(\boldsymbol{\gamma},\boldsymbol{\theta},\pp)$, and set $\theta_i\equiv\theta\in\mathbb{R}_+$ and
\begin{align*}
\frac{\gamma_{i,1}\, (\gamma_{i,1}+1)}{\left(\sum_{j=1}^n\gamma_{j,1}\right)\left(\sum_{j=1}^n\gamma_{j,1}+1\right)}=\cdots=\frac{\gamma_{i,m}\, (\gamma_{i,m}+1)}{\left(\sum_{j=1}^n\gamma_{j,m}\right)\left(\sum_{j=1}^n\gamma_{j,m}+1\right)}
\end{align*}
for all $i\in\mathcal{N}$. Then it is not difficult to check directly that $r_{q,i}^2=\tilde{r}_{q,i}^2,\ i\in\mathcal{N}$, and therefore we have $\mathbf{X}\in\mathfrak{W}_2$. However, the choice of parameters above does not guarantee Equation \eqref{p-new-eq},
and consequently by Theorem \ref{thm:main}, we do not necessarily have
$\XX\in \mathfrak{W}_1$.
\end{example}
We conclude this section by outlining a situation in which the fact, $\XX\in\mathfrak{W}_1$, does imply the fact, $\XX\in\mathfrak{W}_2$, and vice versa; curiously, this connects Questions \ref{question} and \ref{question2} to the celebrated Lukacs theorem \citep[][]{Lukacs1955}. For this, recall that the fact that the loss portfolio $\XX=(X_1,\ldots,X_n)\in\mathcal{X}^n$ belongs to the set $\mathfrak{W}_k,\ k\in\mathbb{N}$, or in other words, that the RVs $R_i$, $i=1,\ldots,n$, and $S_X$ are uncorrelated conditionally on $S_X>s$ for all $s\in[0,\ \infty)$, does not in general imply the fact that the loss RVs $\RR$ and $S_X$ are independent. This statement holds true even if the constituents of the loss portfolio $\XX\in\mathcal{X}^n$ are independent. The following assertion delineates the cases, in which the RVs $\RR$ and $S_X$ are independent in the context of Questions \ref{question} and \ref{question2}.

\begin{corollary}
Assume that the loss RVs $X_1,\ldots,X_n\in\mathcal{X}$ are independent. The loss portfolio $\XX=(X_1,\ldots,X_n)\in\mathcal{X}^n$ belongs to the sets $\mathfrak{W}_1$ and $\mathfrak{W}_2$ if and only if the loss RV $X_i\in\mathcal{X}$ is distributed gamma with the shape and scale parameters $\gamma_i>0$ and $\theta>0$, respectively, $i\in\mathcal{N}$. In this case, the RVs $\RR$ and $S_X$ are independent.
\end{corollary}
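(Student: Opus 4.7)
The plan is to prove the two directions of the characterization separately and then deduce the closing independence claim. For the \emph{if} direction, assume each $X_i\sim {\rm Gamma}(\gamma_i,\theta)$ with common scale $\theta>0$, and that the $X_i$'s are mutually independent. The classical Lukacs theorem recalled at the end of Section \ref{sec-afterthoughts} gives both that $S_X\sim {\rm Gamma}(\gamma_\bullet,\theta)$ (with $\gamma_\bullet:=\sum_i\gamma_i$) and that $\RR=(R_1,\ldots,R_n)$ is ${\rm Dirichlet}(\gamma_1,\ldots,\gamma_n)$-distributed and independent of $S_X$. This independence collapses $\Cov(R_i^k,S_X^k\,|\,S_X>s_q)$ to zero for every $k\in\mathbb{N}$, $q\in[0,\,1)$ and $i\in\mathcal{N}$, placing $\XX$ in every $\mathfrak{W}_k$, and in particular in $\mathfrak{W}_1\cap\mathfrak{W}_2$.

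For the \emph{only if} direction, I would work at the level of Laplace transforms. Writing $\psi_i:=\log\phi_{X_i}$ and $\mu_i:=\mathbb{E}[X_i]$, Theorem \ref{prop:independent-DSI} immediately yields the factorization $\psi_i=\mu_i\psi_0$ with $\psi_0:=\psi_1/\mu_1$. The additional information encoded in $\XX\in\mathfrak{W}_2$ is extracted by repeating the argument of Theorem \ref{thm:SB-character} (with $R_i$ replaced by $R_i^2$): it forces $\mathbb{E}[X_i^2\,|\,S_X]=d_i\,S_X^2$ with $d_i:=\mathbb{E}[X_i^2]/\mathbb{E}[S_X^2]$. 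Multiplying by $e^{-tS_X}$, taking expectations and invoking the mutual independence of the $X_k$'s then produces the Laplace-transform identity $\phi_{X_i}''(t)/\phi_{X_i}(t)=d_i\,\phi_{S_X}''(t)/\phi_{S_X}(t)$ for $\mathrm{Re}(t)>0$. Substituting the factorization $\psi_k=\mu_k\psi_0$ on both sides collapses this to the scalar equation
\begin{equation*}
(\mu_i-d_i\mu_\bullet)\,\psi_0''(t)=(d_i\mu_\bullet^2-\mu_i^2)\,(\psi_0'(t))^2,\qquad \mathrm{Re}(t)>0,\ i\in\mathcal{N},
\end{equation*}
with $\mu_\bullet:=\mu_1+\cdots+\mu_n$.

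The delicate step — and the one I expect to be the main obstacle — is the case analysis on this family of identities. If $A_i:=\mu_i-d_i\mu_\bullet$ were to vanish for some $i$, then continuity of $\psi_0'$ together with $\psi_0'(0)=-1$ would force $B_i:=d_i\mu_\bullet^2-\mu_i^2=0$ as well, and the combination $A_i=B_i=0$ yields the impossibility $\mu_\bullet=\mu_i$ when $n\geq 2$ and all $\mu_k>0$. Hence $A_i\neq 0$ for every $i$, each identity rewrites as $\psi_0''=(B_i/A_i)(\psi_0')^2$, and since the left-hand side does not depend on $i$, the ratio $\theta:=B_i/A_i$ must be a common constant (strictly positive by positivity and non-degeneracy of the $X_i$'s). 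Solving the ODE $\psi_0''=\theta(\psi_0')^2$ subject to $\psi_0(0)=0$ and $\psi_0'(0)=-1$ then yields the unique solution $\psi_0(t)=-\theta^{-1}\log(1+\theta t)$, whence $\phi_{X_i}(t)=(1+\theta t)^{-\mu_i/\theta}$, i.e., $X_i\sim {\rm Gamma}(\mu_i/\theta,\theta)$ with the scale shared across all business units. The final independence of $\RR$ and $S_X$ in the statement then follows, once more, from Lukacs' theorem applied to this common-scale gamma structure.
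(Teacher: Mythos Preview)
Your proof is correct and follows essentially the same route as the paper's: Lukacs' theorem for the ``if'' direction, and for the ``only if'' direction, combining the $\mathfrak{W}_1$ condition (Theorem \ref{prop:independent-DSI}) with the $\mathfrak{W}_2$ condition to obtain a second-order ODE for the Laplace transform whose unique solution is the gamma family. The only cosmetic differences are that you work with the common log-transform $\psi_0$ and compare each $X_i$ against $S_X$ (via $\mathbb{E}[X_i^2\mid S_X]=d_i S_X^2$), whereas the paper stays with $\phi_{X_i}$ and uses the pairwise identity $\phi_{X_j}\phi_{X_i}''=b\,\phi_{X_i}\phi_{X_j}''$; your added case analysis on $A_i=0$ and the sign of $\theta$ makes the argument slightly more complete.
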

\begin{proof}
\label{appendixproof2}
In order to prove the `if' direction, we note that by Lukacs' theorem, the assumption $X_i\sim Ga(\gamma_i,\theta)$ implies $R_i\perp\!\!\!\perp S_X$ for all  $i\in\mathcal{N}$, which in turn implies 
$\XX \in \mathfrak{W}_1\cap\mathfrak{W}_2$.

Further, let us prove the `only if' direction. For this, fix $i\in\mathcal{N}$ and note that by Theorems \ref{prop:independent-DSI} and \ref{thm:SB-character} - with the assumption of independence made in the latter case - we arrive at the following two equations, where $\phi_{X_i}(t)$ and $\phi_{X_j}(t)$ denote, respectively, the Laplace transforms of the loss RVs $X_i$ and $X_j$, and $a:={\mathbb{E}[X_j]\,/\, \mathbb{E}[X_i]}$ and $b:={\mathbb{E}[X_i^2]\,/\,\mathbb{E}[X_j^2]},\ 1\leq i\leq j\leq n$, for the simplicity of exposition
\begin{equation}
\label{eq:B1}
\phi_{X_j}(t)=\left(\phi_{X_i}(t)\right)^{a},\quad {\rm Re}(t)>0
\end{equation}
and
\begin{equation}
\label{eq:B2}
\phi_{X_j}(t)\,\frac{d^2}{dt^2}\phi_{X_i}(t)=b\,\phi_{X_i}(t)\,\frac{d^2}{dt^2}\phi_{X_j}(t),\quad {\rm Re}(t)>0.
\end{equation}
Rewriting the equations above in terms of the Laplace transform $\phi_{X_i}$ only, we obtain, for ${\rm Re}(t)>0$,
\begin{align*}
\left(\phi_{X_i}(t)\right)^{a}\,\frac{d^2}{dt^2}\phi_{X_i}(t)
=b\, \phi_{X_i}(t)\left[a(a-1)\left(\phi_{X_i}(t)\right)^{a-2}\left(
\frac{d}{dt}\phi_{X_i}(t)\right)^2+a\,\left(\phi_{X_i}(t)\right)^{a-1}
\frac{d^2}{dt^2}\phi_{X_i}(t)\right],
\end{align*}
which, with some algebra and the notation $c=a\, b\,(a-1)\,(1-a\, b)^{-1}$, simplifies to
\begin{align*}
\frac{\frac{d^2}{dt^2}\phi_{X_i}(t)}{\frac{d}{dt}\phi_{X_i}(t)}=c\, \frac{\frac{d}{dt}\phi_{X_i}(t)}{\phi_{X_i}(t)},\quad {\rm Re}(t)>0.
\end{align*}
After some more algebra, we arrive at the following first-order non-linear ODE
\begin{align*}
\frac{d}{dt}\phi_{X_i}(t)=-\e[X_i]\, \left(\phi_{X_i}(t)\right)^c,\quad {\rm Re}(t)>0,
\end{align*}
with the solution
\begin{align*}
\phi_{X_i}(t)=\Big[1+(c-1)\,\e[X_i]\,t \Big]^{-1/(c-1)}.
\end{align*}
Finally, substituting the expressions for the first and second moments of the gamma distribution in the constant $c$ and hence noticing that $c-1=1/\gamma_i$, we arrive at
\begin{align*}
\phi_{X_i}(t)=\left(1+\theta\, t \right)^{-\gamma_i},\quad {\rm Re}(t)>0,
\end{align*}
which is the Laplace transform of the RV distributed gamma with the shape and scale parameters $\gamma_i>0$ and $\theta>0$, respectively. Hence, $X_i\backsim Ga(\gamma_i,\sigma),\ i\in\mathcal{N}$. Also, for $\gamma_{\bullet}=\gamma_1+\cdots+\gamma_n$ as before, we have $S_X\sim Ga(\gamma_{\bullet},\sigma)$, and by Lukacs' theorem, the RVs $R_i=X_i\,/\,S_X$ and $S_X$ are independent. This completes the proof of the `only if' direction as well as of the corollary as a whole.
\end{proof}
\section{Conclusions}
\label{Sec-conclusions}	

The striking majority of the risk capital allocation rules that exist nowadays are induced by risk measures, with the CTE risk measure being arguably the most popular. While the CTE risk measure - and the risk capital allocation based on it - are sound mathematical objects that have been studied extensively in a great variety of contexts and even embedded in some regulatory accords, they are not explicitly linked to the risk preferences of an insurer. Yet, these risk preferences are of fundamental importance and should presumably drive - or at least impact - the allocation exercise.

In this paper, we have demonstrated that there are model settings, in which the risk capital allocation induced by the CTE risk measure does reflect insurer's risk preferences by accounting for the decreasing marginal effect of the increase in the aggregate loss. This implies that there exist loss portfolios for which the utilization of the CTE risk measure as a basis for allocating risk capital may be justified from both the regulatory and profit maximizing perspectives. Moreover, we have developed exhaustive description of the just-mentioned special loss portfolios and elucidated the findings with ample specific examples.

Our message in this paper is two-fold. On the one hand-side, the feasibility of achieving the apparently desirable agreement between the prudence-  and profitability- driven considerations for some loss portfolios gives hope that the CTE-based risk capital allocation may indeed be \textit{appropriate} for profitability analysis and price determination. On the other hand-side, the sparsity of the collection of the loss portfolios for which the just-mentioned agreement is achieved implies that more research is required in order to explicitly connect the risk capital allocation exercise to its objectives as well as to the economic traits of the insurer.

\bibliographystyle{apalike}
\bibliography{SC_Allocation.bib}

\begin{thebibliography}{}

\bibitem[Aitchison, 1986]{Aitchison1986}
Aitchison, J. (1986).
\newblock {\em {The Statistical Analysis of Compositional Data}}.
\newblock Chapman and Hall, New York.

\bibitem[Arratia et~al., 2019]{Arratiaetal2019}
Arratia, R., Goldstein, L., and Kochman, F. (2019).
\newblock {Size bias for one and all}.
\newblock {\em Probability Surveys}, 16:1--61.

\bibitem[Artzner et~al., 1999]{Artzner1999}
Artzner, P., Delbaen, F., Eber, J.-M., and Heath, D. (1999).
\newblock {Coherent measures of risk}.
\newblock {\em Mathematical Finance}, 9(3):203--228.

\bibitem[Bauer and Zanjani, 2016]{Bauer2016}
Bauer, D. and Zanjani, G. (2016).
\newblock {The marginal cost of risk, risk measures, and capital allocation}.
\newblock {\em Management Science}, 62(5):1431--1457.

\bibitem[B{\"a}uerle and Shushi, 2020]{BaurleShushi2020}
B{\"a}uerle, N. and Shushi, T. (2020).
\newblock {Risk management with Tail Quasi-Linear Means}.
\newblock {\em Annals of Actuarial Science}, 14:170--187.

\bibitem[Belles-Sampera et~al., 2016]{Belles-Sampera2016}
Belles-Sampera, J., Guillen, M., and Santolino, M. (2016).
\newblock {Compositional methods applied to capital allocation problems}.
\newblock {\em The Journal of Risk}, 19(2):15--30.

\bibitem[Bellini et~al., 2018]{Bellini2018}
Bellini, F., Laeven, R. J.~A., and Rosazza~Gianin, E. (2018).
\newblock Robust return risk measures.
\newblock {\em Mathematics and Financial Economics}, 12(1):5--32.

\bibitem[Boonen et~al., 2019]{Boonen2019}
Boonen, T.~J., Guillen, M., and Santolino, M. (2019).
\newblock {Forecasting compositional risk allocations}.
\newblock {\em Insurance: Mathematics and Economics}, 84:79--86.

\bibitem[Cai and Li, 2010]{CaiLi2005}
Cai, J. and Li, H. (2010).
\newblock {Conditional tail expectations for multivariate phase-type
  distributions.}
\newblock {\em Journal of Applied Probability}, 42(3):810--825.

\bibitem[Chong et~al., 2021]{Chong2019}
Chong, W.~F., Feng, R., and Jin, L. (2021).
\newblock Holistic principle for risk aggregation and capital allocation.
\newblock {\em Annals of Operations Research}, pages 1--34.

\bibitem[Cossette et~al., 2013]{Cossette2013}
Cossette, H., C{\^{o}}t{\'{e}}, M.-P., Marceau, E., and Moutanabbir, K. (2013).
\newblock {Multivariate distribution defined with Farlie-Gumbel-Morgenstern
  copula and mixed Erlang marginals: Aggregation and capital allocation}.
\newblock {\em Insurance: Mathematics and Economics}, 52(3):560--572.

\bibitem[Cossette et~al., 2012]{Cossette2012}
Cossette, H., Mailhot, M., and Marceau, {\'{E}}. (2012).
\newblock {TVaR-based capital allocation for multivariate compound
  distributions with positive continuous claim amounts}.
\newblock {\em Insurance: Mathematics and Economics}, 50(2):247--256.

\bibitem[Denault, 2001]{Denault2001}
Denault, M. (2001).
\newblock {Coherent allocation of risk capital}.
\newblock {\em The Journal of Risk}, 4(1):1--34.

\bibitem[Dhaene et~al., 2012]{Dhaene2012}
Dhaene, J., Tsanakas, A., Valdez, E.~A., and Vanduffel, S. (2012).
\newblock {Optimal capital allocation principles}.
\newblock {\em Journal of Risk and Insurance}, 79(1):1--28.

\bibitem[F{\"{o}}llmer and Schied, 2016]{Follmer2001Convex}
F{\"{o}}llmer, H. and Schied, A. (2016).
\newblock {\em {Stochastic Finance: An Introduction in Discrete Time}}.
\newblock De Gruyter, Berlin.

\bibitem[Furman et~al., 2020a]{Furmanetal18}
Furman, E., Hackmann, D., and Kuznetsov, A. (2020a).
\newblock {On log-normal convolutions: An analytical-numerical method with
  applications to economic capital determination}.
\newblock {\em Insurance: Mathematics and Economics}, 90:120--134.

\bibitem[Furman et~al., 2020b]{Furman2019a}
Furman, E., Kye, Y., and Su, J. (2020b).
\newblock {A reconciliation of the top-down and bottom-up approaches to risk
  capital allocations: Proportional allocations revisited}.
\newblock {\em North American Actuarial Journal}, in press.

\bibitem[Furman et~al., 2020c]{Furmanetal2020}
Furman, E., Kye, Y., and Su, J. (2020c).
\newblock {Multiplicative background risk models: Setting a course for the
  idiosyncratic risk factors distributed phase-type}.
\newblock {\em Insurance: Mathematics and Economics}, 96:153--167.

\bibitem[Furman and Landsman, 2005]{Furman2005c}
Furman, E. and Landsman, Z. (2005).
\newblock {Risk capital decomposition for a multivariate dependent gamma
  portfolio}.
\newblock {\em Insurance: Mathematics and Economics}, 37(3):635--649.

\bibitem[Furman and Landsman, 2006]{Furman2006}
Furman, E. and Landsman, Z. (2006).
\newblock {Tail variance premium with applications for elliptical portfolio of
  risks}.
\newblock {\em ASTIN Bulletin}, 36(2):433--462.

\bibitem[Furman and Landsman, 2010]{Furman2010}
Furman, E. and Landsman, Z. (2010).
\newblock {Multivariate Tweedie distributions and some related capital-at-risk
  analyses}.
\newblock {\em Insurance: Mathematics and Economics}, 46(2):351--361.

\bibitem[Furman and Zitikis, 2008a]{Furman2008a}
Furman, E. and Zitikis, R. (2008a).
\newblock {Weighted premium calculation principles}.
\newblock {\em Insurance: Mathematics and Economics}, 42(1):459--465.

\bibitem[Furman and Zitikis, 2008b]{Furman2008b}
Furman, E. and Zitikis, R. (2008b).
\newblock {Weighted risk capital allocations}.
\newblock {\em Insurance: Mathematics and Economics}, 43(2):263--269.

\bibitem[Furman and Zitikis, 2010]{Furman2010b}
Furman, E. and Zitikis, R. (2010).
\newblock {General Stein-type covariance decompositions with applications to
  insurance and finance}.
\newblock {\em ASTIN Bulletin}, 40(1):369--375.

\bibitem[Guo et~al., 2020]{Guo2018}
Guo, Q., Bauer, D., and Zanjani, G. (2020).
\newblock {Capital allocation techniques: Review and comparison}.
\newblock {\em Variance}, in press.

\bibitem[Gupta and Song, 1996]{Gupta.1996}
Gupta, A.~K. and Song, D. (1996).
\newblock {Generalized {L}iouville distribution}.
\newblock {\em Computers and Mathematics with Applications}, 32(2):103--109.

\bibitem[Gupta and Richards, 1987]{Gupta1987}
Gupta, R.~D. and Richards, D.~S. (1987).
\newblock {Multivariate Liouville distributions}.
\newblock {\em Journal of Multivariate Analysis}, 23(2):233--256.

\bibitem[Hardy et~al., 1952]{Hardyetal1952}
Hardy, G., Littlewood, J., and Polya, G. (1952).
\newblock {\em {Inequalities}}.
\newblock Cambridge University Press, Cambridge.

\bibitem[Hendriks and Landsman, 2017]{Hendriks2017}
Hendriks, H. and Landsman, Z. (2017).
\newblock {A generalization of multivariate Pareto distributions: {T}ail risk
  measures, divided differences and asymptotics}.
\newblock {\em Scandinavian Actuarial Journal}, 2017(9):785--803.

\bibitem[Hua, 2016]{Hua2016}
Hua, L. (2016).
\newblock {A note on upper tail behavior of Liouville copulas}.
\newblock {\em Risks}, 4(4):40.

\bibitem[Ignatov and Kaishev, 2004]{Ignatov2004}
Ignatov, Z.~G. and Kaishev, V.~K. (2004).
\newblock {A finite-time ruin probability formula for continuous claim
  severities}.
\newblock {\em Journal of Applied Probability}, 41(2):570--578.

\bibitem[Kalkbrener, 2005]{Kalkbrener2005}
Kalkbrener, M. (2005).
\newblock {An axiomatic approach to capital allocation}.
\newblock {\em Mathematical Finance}, 15(3):425--437.

\bibitem[Kim and Kim, 2019]{KimKim2019}
Kim, J.~H. and Kim, S.-Y. (2019).
\newblock {Tail risk measures and risk allocation for the class of multivariate
  normal mean–variance mixture distributions}.
\newblock {\em Insurance: Mathematics and Economics}, 86:145--157.

\bibitem[Kim, 2010]{Kim2010}
Kim, J. H.~T. (2010).
\newblock {Conditional tail moments of the exponential family and its related
  distributions}.
\newblock {\em North American Actuarial Journal}, 14(2):198--216.

\bibitem[Laeven and Goovaerts, 2004]{Laeven2004}
Laeven, R.~J. and Goovaerts, M.~J. (2004).
\newblock An optimization approach to the dynamic allocation of economic
  capital.
\newblock {\em Insurance: Mathematics and Economics}, 35(2):299--319.

\bibitem[Landsman et~al., 2016]{Landsmanetal2017}
Landsman, Z., Makov, U., and Shushi, T. (2016).
\newblock {Tail conditional moments for elliptical and log-elliptical
  distributions}.
\newblock {\em Insurance: Mathematics and Economics}, 71:179--188.

\bibitem[Landsman and Valdez, 2003]{LandsmanValdez2003}
Landsman, Z. and Valdez, E.~A. (2003).
\newblock {Tail conditional expectations for elliptical distributions}.
\newblock {\em North American Actuarial Journal}, 7(4):55--71.

\bibitem[Lee and Lin, 2012]{Lee2012}
Lee, S. C.~K. and Lin, X.~S. (2012).
\newblock {Modeling dependent risks with multivariate Erlang mixtures}.
\newblock {\em ASTIN Bulletin}, 42(1):153--180.

\bibitem[Lukacs, 1955]{Lukacs1955}
Lukacs, E. (1955).
\newblock {A characterization of the gamma distribution}.
\newblock {\em Annals of Mathematical Statistics}, 26(2):319--324.

\bibitem[McNeil and Ne{\v{s}}lehov{\'{a}}, 2010]{McNeil2010a}
McNeil, A.~J. and Ne{\v{s}}lehov{\'{a}}, J. (2010).
\newblock {From Archimedean to Liouville copulas}.
\newblock {\em Journal of Multivariate Analysis}, 101(8):1772--1790.

\bibitem[Ng et~al., 2011]{Ng2011}
Ng, K.~W., Tian, G.~L., and Tang, M.~L. (2011).
\newblock {\em {Dirichlet and Related Distributions: Theory, Methods and
  Applications}}.
\newblock Wiley, Chichester.

\bibitem[Panjer, 2002]{Panjer2001}
Panjer, H. (2002).
\newblock {Measurement of risk, solvency requirements and allocation of capital
  within financial conglomerates}.
\newblock Technical report, University of Waterloo.

\bibitem[Patil and Ord, 1976]{Patil1976}
Patil, G. and Ord, J. (1976).
\newblock {On size-biased sampling and related form-invariant weighted
  distributions}.
\newblock {\em Sankhyā: The Indian Journal of Statistics Series B},
  38(1):48--61.

\bibitem[Ratovomirija et~al., 2017]{Ratovomirija2017}
Ratovomirija, G., Tamraz, M., and Vernic, R. (2017).
\newblock {On some multivariate Sarmanov mixed Erlang reinsurance risks:
  Aggregation and capital allocation}.
\newblock {\em Insurance: Mathematics and Economics}, 74:197--209.

\bibitem[Shushi and Yao, 2020]{ShushiYao2020}
Shushi, T. and Yao, J. (2020).
\newblock {Multivariate risk measures based on conditional expectation and
  systemic risk for Exponential Dispersion Models}.
\newblock {\em Insurance: Mathematics and Economics}, 93:178--186.

\bibitem[Tasche, 2004]{Tasche2004}
Tasche, D. (2004).
\newblock Allocating portfolio economic capital to sub-portfolios.
\newblock In {\em Economic Capital: A Practitioner Guide}, chapter~13, pages
  275--302. Risk Books, London.

\bibitem[Tsanakas and Barnett, 2003]{TsanakasBarnett2003}
Tsanakas, A. and Barnett, C. (2003).
\newblock {Risk capital allocation and cooperative pricing of insurance
  liabilities.}
\newblock {\em Insurance: Mathematics and Economics}, 33:239--254.

\bibitem[Venter, 2004]{Venter2004}
Venter, G. (2004).
\newblock {Capital allocation survey with commentary}.
\newblock {\em North American Actuarial Journal}, 8(2):96--107.

\bibitem[Verbelen et~al., 2016]{Verbelen2016}
Verbelen, R., Antonio, K., and Claeskens, G. (2016).
\newblock {Multivariate mixtures of Erlangs for density estimation under
  censoring}.
\newblock {\em Lifetime Data Analysis}, 22(3):429--455.

\bibitem[Vernic, 2006]{Vernic2006}
Vernic, R. (2006).
\newblock {Multivariate skew-normal distributions with applications in
  insurance.}
\newblock {\em Insurance: Mathematics and Economics}, 38(2):413--426.

\bibitem[Vernic, 2011]{Vernic2011}
Vernic, R. (2011).
\newblock {Tail conditional expectation for the multivariate Pareto
  distribution of the second kind: Another approach}.
\newblock {\em Methodology and Computing in Applied Probability},
  13(1):121--137.

\bibitem[Wang and Zitikis, 2021]{Wang.2019}
Wang, R. and Zitikis, R. (2021).
\newblock An axiomatic foundation for the expected shortfall.
\newblock {\em Management Science}, 67(3):1413--1429.

\bibitem[Willmot and Woo, 2014]{Willmot2014}
Willmot, G.~E. and Woo, J.-K. (2014).
\newblock {On some properties of a class of multivariate Erlang mixtures with
  insurance applications}.
\newblock {\em ASTIN Bulletin}, 45(1):151--173.

\end{thebibliography}

\end{document}